\documentclass[journal,onecolumn]{IEEEtran}

\usepackage{amsmath, amsthm, amssymb}
\usepackage{latexsym}
\usepackage{graphicx}
\usepackage{verbatim}
\usepackage{mathrsfs}
\usepackage{float}
\usepackage[lofdepth, lotdepth]{subfig}
\usepackage{array}
\usepackage{url}
\usepackage{algorithm}
\usepackage[noend]{algorithmic}
\usepackage{cite}
\usepackage[table]{xcolor}
\usepackage{enumerate}
\usepackage{eucal}
\usepackage{stmaryrd}
\usepackage{mathtools}
\usepackage{pgf}
\usepackage{tikz}
\usetikzlibrary{arrows,automata}
\usepackage[latin1]{inputenc}
\usepackage{tikz}
\usetikzlibrary{arrows, arrows.meta,calc,positioning,shapes.arrows}
\usetikzlibrary{decorations.markings}
\usetikzlibrary{chains,fit,shapes}
\usetikzlibrary{mindmap,trees}
\usepackage{ dsfont }

\pgfarrowsdeclarecombine{|<}{>|}{|}{|}{latex}{latex}
\def\Dimline[#1][#2][#3]{
	\begin{scope}[>=latex] 
		\draw[|<->|,
		decoration={markings, 
			mark=at position .5 with {\node at (0,0.25) {\small{#3}};},
		},
		postaction=decorate] #1 -- #2 ;
	\end{scope}
}


\usepackage[top=0.6in, bottom=0.6in, left=0.55in, right=0.55in]{geometry}
\usepackage{mathtools}
\DeclarePairedDelimiter\ceil{\lceil}{\rceil}
\DeclarePairedDelimiter\floor{\lfloor}{\rfloor}



\theoremstyle{plain}
\newtheorem{theorem}{Theorem}

\newtheorem{lemma}{Lemma}

\newtheorem{corollary}{Corollary}

\theoremstyle{definition}
\newtheorem{definition}{Definition}
\newtheorem{example}{Example}
\newtheorem{remark}{Remark}




\DeclareMathAlphabet{\mathbfsl}{OT1}{ppl}{b}{it} 

\newcommand{\bA}{{\mathbfsl A}}
\newcommand{\bB}{{\mathbfsl B}}

\newcommand{\bp}{{\mathbfsl p}}

\newcommand{\by}{{\mathbfsl y}}

\newcommand{\bc}{{\mathbfsl c}}

\newcommand{\bw}{{\mathbfsl{w}}}

\newcommand{\br}{{\mathbfsl{r}}}
\newcommand{\bx}{{\mathbfsl{x}}}
\newcommand{\bz}{{\mathbfsl{z}}}











\newcommand{\ppmod}[1]{~({\rm mod~}#1)}


\renewcommand{\ge}{\geqslant}
\renewcommand{\le}{\leqslant}

\newcommand{\et}{{\emph{et al.}}}

\newcommand{\enc}{\textsc{Enc}}
\newcommand{\dec}{\textsc{Dec}}


\IEEEoverridecommandlockouts
\begin{document}

\pagestyle{empty}

\title{Two-Dimensional RC/Subarray Constrained Codes: Bounded Weight and Almost Balanced Weight \\[2mm]}

\author{
   \IEEEauthorblockN{
   	Tuan Thanh Nguyen,
	Kui Cai, 
	Han Mao Kiah,
	Kees A. Schouhamer Immink,
	and Yeow Meng Chee}
\thanks{ This work was presented in part at the 2021 IEEE International Symposium on Information Theory \cite{ISIT1} and 2022 IEEE International Symposium on Information Theory \cite{ISIT2}. The work of Kui Cai and Tuan Thanh Nguyen is supported by Singapore Ministry of Education Academic Research Fund Tier 2 MOE2019-T2-2-123. The research of Han Mao Kiah is supported by the Singapore Ministry of Education under Grant MOE2019-T2-2-171.}
\thanks{Tuan Thanh Nguyen and Kui Cai are with the Science, Mathematics and Technology Cluster, Singapore University of Technology and Design, Singapore 487372 (email: \{tuanthanh\_nguyen, cai\_kui\}@sutd.edu.sg).}
\thanks{Han Mao Kiah is with the School of Physical and Mathematical Sciences, Nanyang Technological University, Singapore 637371 (email: hmkiah@ntu.edu.sg).}
\thanks{Kees A. Schouhamer Immink is with the Turing Machines Inc, Willemskade 15d, 3016 DK Rotterdam, The Netherlands (email: immink@turing-machines.com).}
\thanks{Yeow Meng Chee is with the Department of Industrial Systems Engineering and Management, National University of Singapore, Singapore (email: ymchee@nus.edu.sg).}
}

\maketitle

\hspace{-3mm}\begin{abstract}
In this work, we study two types of constraints on two-dimensional binary arrays. In particular, given $p,\epsilon>0$, we study
\begin{itemize}
\item The $p$-bounded constraint: a binary vector of size $m$ is said to be $p$-bounded if its weight is at most $pm$, 
\item The $\epsilon$-balanced constraint:  a binary vector of size $m$ is said to be $\epsilon$-balanced if its weight is within $\big[(1/2-\epsilon)m, (1/2+\epsilon)m\big]$.
\end{itemize}

Such constraints are crucial in several data storage systems, those regard the information data as two-dimensional (2D) instead of one-dimensional (1D), such as the crossbar resistive memory arrays and the holographic data storage. In this work, efficient encoding/decoding algorithms are presented for binary arrays so that the weight constraint (either $p$-bounded constraint or $\epsilon$-balanced constraint) is enforced over every row and every column, regarded as 2D row-column (RC) constrained codes; or over every subarray, regarded as 2D subarray constrained codes. 

While low-complexity designs have been proposed in the literature, mostly focusing on 2D RC constrained codes where $p=1/2$ and $\epsilon=0$, this work provides efficient coding methods that work for both 2D RC constrained codes and 2D subarray constrained codes, and more importantly, the methods are applicable for arbitrary values of $p$ and $\epsilon$. Furthermore, for certain values of $p$ and $\epsilon$, we show that, for sufficiently large array size, there exists linear-time encoding/decoding algorithm that incurs at most one redundant bit. 




\end{abstract}


\section{Introduction}

Two-dimensional (2D) weight-constrained codes have attracted recent attention due to various application in modern storage devices that are attempting to increase the storage density by regarding the information data as two dimensional binary arrays. In this work, we are motivated by the application of 2D weight-constrained codes in the {\em holographic recording systems} and the {\em resistive memory} based on crossbar arrays. In particular, in optical recording, the holographic memory capitalizes on the fact that the recording device is a surface and therefore the recording data should be regarded as 2D, as opposed to the track-oriented one-dimensional (1D) recording paradigm \cite{roth:1999, D:1990, J:1996, D:1992}. On the other hand, the key in resistive memory technologies is that the memory cell is a passive two-terminal device that can be both read and written over a simple crossbar structure \cite{Chen:2011, T:2009, S:2014, R:2016}. Both models offer a huge density advantage, however, face new reliability issues and introduce new types of constraints, which are now 2D-constraints, rather than 1D-constraints.
We next briefly describe the motivation of our study on the two types of 2D weight-constraints: the {\em bounded weight-constraint} and the {\em almost-balanced weight-constraint}. 


The 2D bounded weight-constraint is used to limit the number of 1's in an array. It has been suggested as an effective method to reduce the {\em sneak path} effect, a fundamental and challenging problem, in the crossbar memory arrays. The sneak path problem was addressed by numerous works with different approaches and at various system layers \cite{ROTH:ISIT, O:2009, eitan:2016,zhong:2020,G:2020}. In particular, when a cell in a crossbar array is read, a voltage is applied upon it, and current measurement determines whether it is in a low-resistance state (LRS, corresponding to a `1') or a high-resistance state (HRS, corresponding to a `0'). The sneak path occurs when a resistor in the HRS is being read, current also passes through a series of resistors in the LRS exists in parallel to it, thereby causing it to be erroneously read as low-resistance. Therefore, by enforcing fewer memory cells with the LRSs, we can reduce the sneak path effect. This can be achieved by applying constrained coding techniques to convert the user data into 2D-constrained array that limits the number of 1's. Motivated by the application, in this work, we study the 2D $p$-bounded weight-constraint defined over $n\times n$ arrays via two different models: the 2D $p$-bounded RC constrained codes, that limit the number of 1's in every row and every column; and the 2D $p$-bounded $n'$-Subarray constrained codes, that limit the number of 1's in every subarray of size $n' \times n'$ for some $n'<n$. Here, a binary vector of size $m$ is said to be $p$-bounded if its weight is at most $pm$. For 2D $p$-bounded RC constrained codes, Ordentlich and Roth \cite{ROTH:ISIT} required the weight in every row and every column to be at most half, i.e. $p=1/2$, and presented efficient encoders with redundancy at most $2n$ for $n \times n$ arrays. In \cite{ROTH:2000}, the authors studied the bounds of codes that required the weight in every row and every column is precisely $pn$ and provided a coding scheme based on the {\em enumeration coding technique}. In this work, we extend the study of $p$-bounded weight-constraint in the literature works and design 2D $p$-bounded RC constrained codes and 2D $p$-bounded subarray constrained codes  for arbitrary $p\in(0,1)$. 


The 2D almost-balanced weight-constraint is used to control the imbalance between `1's and `0's in an array. This constraint is crucial in holographic recording systems. In such systems, data is stored optically in the form of 2D pages \cite{J:1996, D:1992}, and each data page is a pattern of `0's and `1's, represented by {\em dark} and {\em light} spots, respectively. To improve the reliability of the holographic recording system, one suggested solution by Vardy \et{} \cite{vardy:1996} is to use coding techniques that do not permit a large imbalance between `0's and `1's so that, during recording, the amount of signal light be independent of the data content. In \cite{ROTH:2000}, Ordentlich and Roth also emphasized the importance of balanced codes, and experiments' reports on holographic memory and other existing optical devices also suggested that `0's and `1's in the recorded data need to be balanced within certain area or patterns. Motivated by the application, in this work, we study the 2D {\em $\epsilon$-balanced} weight-constraint defined over $n\times n$ arrays via two different models: the 2D $\epsilon$-balanced RC constrained codes, that enforce every row and every column to be $\epsilon$-balanced; and the 2D $\epsilon$-balanced $m$-subarray constrained codes, that enforce every subarray of size $m \times m$ to be $\epsilon$-balanced for given $m<n$. For 2D $\epsilon$-balanced RC constrained codes, when $\epsilon=0$, Talyansky \et{} \cite{roth:1999} enforced the weight in every row and every column of $n \times n$ array to be exactly $n/2$ and presented an efficient encoding method, that uses roughly $2n\log n+\Theta(n\log \log n)$ redundant bits. To further reduce the redundancy, instead of using one of the algorithms in \cite{knuth:1986,bose:1996}, Talyansky \et{} balanced the rows with the (more computationally complex) enumerative coding technique. Consequently, the redundancy can be reduced to $(3/2)n\log n+\Theta(n\log \log n)$ redundant bits. 
On the other hand, there is no known design for arbitrary value of $\epsilon$. The efficient design of 2D $\epsilon$-balanced RC constrained codes and 2D $\epsilon$-balanced subarray constrained codes, given arbitrary value of $\epsilon>0$, is the main contribution of this work.

In this work, we present two efficient coding methods for 2D RC constrained codes and 2D subarray constrained codes for arbitrary $p\in(0,1)$ and $\epsilon\in(0,1/2)$. The coding methods are based on: (method A) the divide and conquer algorithm and a modification of the Knuth's balancing technique, and (method B) the sequence replacement technique. The coding rate of our proposed methods approaches the channel capacity for all $p,\epsilon$. In addition, for certain values of $p$ and $\epsilon$, we show that for sufficiently large $n$, method B incurs at most one redundant bit.


We first go through certain notations and review prior-art coding techniques.

\section{Preliminaries} 

\subsection{Notations} 

Given two binary sequences $\bx=x_1\ldots x_m$ and $\by=y_1\ldots y_n$, the {\em concatenation} of the two sequences is defined by $\bx \by \triangleq x_1\ldots x_m y_1\ldots y_n.$ For a binary sequence $\bx$, we use ${\rm wt}(\bx)$ to denote the weight of $\bx$, i.e the number of ones in $\bx$. We use $\overline{\bx}$ to denote the complement of $\bx$. For example, if $\bx=00111$ then ${\rm wt}(\bx)=3$ and $\overline{\bx}=11000$. 

Let ${\bA}_n$ denote the set of all $n \times n$ binary arrays. The $i$th row of an array $A\in {\bA}_n$ is denoted by $A_i$ and the $j$th column is denoted by $A^j$. Note that an $n \times n$ binary array $A$ can be viewed as a binary sequence of length $n^2$. We define $\Phi(A)$ as a binary sequence of length $n^2$ where bits of the array $A$ are read row by row. For example, if 
 \[A= \left( \begin{array}{ccc}
a & b & c\\
d & e & f\\
g & h & i\end{array} \right),\text{ then } \Phi(A)=abcdefghi.\]

\begin{definition}
Given $m,\epsilon$ where $\epsilon\in [0,1/2]$, a binary sequence $\bx$ of length $m$ is said to be $\epsilon$-balanced if ${\rm wt}(\bx) \in \big[(1/2-\epsilon)m, (1/2+\epsilon)m\big]$. When $\epsilon=0$, we say the sequence is balanced.
\end{definition}

\begin{definition}
Given $m,p$, where $p\in[0,1]$, a binary sequence $\bx$ of length $m$ is said to be $p$-bounded if ${\rm wt}(\bx)\le pm$. 
\end{definition}


Given $n,m,p,\epsilon$, where $\epsilon\in [0,1/2], p\in[0,1], m<n$, we set 
\begin{small}
\begin{align*}
{\rm B}_{\rm RC}(n;p) &\triangleq \Big\{A \in {\bA}_n:  A_i, A^i \text{ are $p$-bounded for all } 1\le i\le n \Big\},\\
{\rm Bal}_{\rm RC}(n;\epsilon) &\triangleq \Big\{A \in {\bA}_n:  A_i, A^i \text{ are $\epsilon$-balanced for all } 1\le i\le n \Big\}. \\
{\rm B}_{\rm S}(n,m;p) &\triangleq \Big\{A \in {\bA}_n: \text{every subarray } B \text{ of size } m\times m 
\text{ are $p$-bounded} \Big\}, \text{ and } \\
{\rm Bal}_{\rm S}(n,m;\epsilon)  &\triangleq \Big\{A \in {\bA}_n: \text{every subarray } B \text{ of size } m\times m  
\text{ are $\epsilon$-balanced} \Big\}.
\end{align*} 
\end{small}
In this work, we are interested in the problem of designing efficient coding methods that encode (decode) binary data  to (from) ${\rm B}_{\rm RC}(n;p)$, ${\rm Bal}_{\rm RC}(n;\epsilon)$, ${\rm B}_{\rm S}(n,m;p)$ and ${\rm Bal}_{\rm S}(n,m;\epsilon)$.
\begin{definition}
The map $\enc: \{0,1\}^k\to \{0,1\}^{n\times n}$
is a {\em 2D $p$-bounded RC encoder} if $\enc(\bx) \in {\rm B}_{\rm RC}(n;p)$ for all $\bx\in \{0,1\}^k$ and there exists a {\em decoder} map $\dec:\{0,1\}^{n\times n} \to \{0,1\}^k$ such that $\dec\circ\enc(\bx)=\bx$. The {\em coding rate of the encoder} is measured by $k/n^2$ and the {\em redundancy of the encoder} is measured by the value $n^2 -  k$ (bits). The {\em 2D $\epsilon$-balanced RC encoder}, {\em 2D $p$-bounded subarray encoder}, and {\em 2D $\epsilon$-balanced subarray encoder} are similarly defined.
\end{definition}

Our design objectives include low redundancy (equivalently, high code rate) and low complexity of the encoding/decoding algorithms. In this work, we show that for sufficiently large $n$, there exist efficient encoders/decoders for ${\rm B}_{\rm RC}(n;p)$, ${\rm Bal}_{\rm RC}(n;\epsilon)$, ${\rm B}_{\rm S}(n,m;p)$ and ${\rm Bal}_{\rm S}(n,m;\epsilon)$, that incur at most one redundant bit.

\subsection{Literature works on 2D $p$-bounded RC constrained codes ${\rm B}_{\rm RC}(n;p)$}
We briefly describe the literature works by Ordentlich and Roth in \cite{ROTH:ISIT,ROTH:2000} that provided encoding/decoding algorithms for ${\rm B}_{\rm RC}(n;p)$. 
\begin{itemize} 
\item For $p=1/2$, i.e. the weight of every row and every column is at most $n/2$, Ordentlich and Roth \cite{ROTH:ISIT} presented two low complexity coding methods. The first method is based on flipping rows and columns of an arbitrary binary array (i.e. using the complement of rows and columns) until the weight-constraint is satisfied in all rows and columns while the second method is based on efficient construction of {\em antipodal matching}. Both codes have roughly $2n$ redundant bits. A lower bound on the optimal redundancy was shown to be $\lambda n+o(n)$ for a constant $\lambda\approx 1.42515$ in \cite{bound:2011}. Note that these two methods can be used to construct ${\rm B}_{\rm RC}(n;p)$ for arbitrary $p>1/2$, in which ${\rm B}_{\rm RC}(n;1/2) \subset {\rm B}_{\rm RC}(n;p)$. 


\item For $p<1/2$, one may follow the coding method in \cite{ROTH:2000}, based on enumeration coding, that ensure the weight in every row and every column to be precisely $pn$. The redundancy of the proposed encoder was at most $2n \mu(n,p)$, where $\mu(n,p)$ is the least redundancy required to encode one-dimensional binary codewords of length $n$ such that the weight is $pn$. If we set
\begin{equation*}
{\rm Q}(n,p)=\Big\{ \bx \in \{0,1\}^n: {\rm wt}(\bx)=pn \Big\},
\end{equation*} 
then we have $\mu(n,p)=nH(p)-\log|{\rm Q}(n,p)|$ where $H(p)=-p\log p-(1-p)\log (1-p)$. It is easy to verify that $\mu(n,p)=\Theta(n)$ for all $p<1/2$. In addition, Ordentlich and Roth \cite{ROTH:2000} also provided a lower bound on the optimal redundancy, which is at least $2n \mu(n,p)+O(n+\log n)$ bits.  
\end{itemize} 

In this work, we first propose efficient coding methods for ${\rm B}_{\rm RC}(n;p)$ when $p>1/2$ or $p<1/2$. Particularly, when $p<1/2$, the redundancy can be reduced to be at most $n \mu(n,p)+O(n+\log n)$ bits. When $p>1/2$, the redundancy can be reduced significantly to be at most $\Theta(n)$ bits and then to only one bit. We then extend the results to design efficient encoders for 2D $p$-bounded subarray constrained codes ${\rm B}_{\rm RC}(n,m;p)$ when $m=n-o(n)$ and $p\geq 1/2$. We review below the antipodal matching (defined in \cite{ROTH:ISIT}) as it will be used in one of our proposed coding methods. 

\begin{definition}[Ordentlich and Roth \cite{ROTH:ISIT}]\label{def} An antipodal matching $\phi$ is a mapping from $\{0,1\}^n$ to itself with the following properties holding for every $\bx \in \{0,1\}^n$:
\begin{enumerate}
\item ${\rm wt}(\phi(\bx)) = n - {\rm w}(\bx).$
\item If ${\rm wt}(\bx)>n/2$ then $\phi(\bx)$ has all its 1's in positions where $\bx$ has 1's. In other words, suppose $\bx=x_1x_2\ldots x_n$ and $\by=\phi(\bx)=y_1y_2\ldots y_n$, then $y_i =1$ implies $x_i =1$ for  $1\le i\le n$.
\item $\phi(\phi(\bx)) = \bx$.
\end{enumerate}
\end{definition}

Ordentlich and Roth \cite{ROTH:ISIT} presented an efficient construction of antipodal matchings $\phi$ for all $n$. In fact, such an antipodal matching can be decomposed into a collection of bijective mappings $\phi = \cup_{i=0}^n \phi_i$, where 
\begin{small}
\begin{equation*}
\phi_i: \Big\{ \bx\in\{0,1\}^n :{\rm wt}(\bx)=i \Big\} \to \Big\{ \bx\in \{0,1\}^n :{\rm wt}(\bx)=n-i \Big\},
\end{equation*}
\end{small}
and $\phi_i$ can be constructed in linear-time for all $n,i$ (refer to \cite{ROTH:ISIT}).


\subsection{Literature works on 2D $\epsilon$-balanced RC constrained codes ${\rm Bal}_{\rm RC}(n;\epsilon)$}

Over 1D codes, to encode binary balanced codes, we have the celebrated {\em Knuth's balancing technique} \cite{knuth:1986}. Knuth's balancing technique is a linear-time algorithm that maps a binary message $\bx$
to a balanced word $\bz$ of the same length by flipping the first $t$ bits of $\bx$.
The crucial observation demonstrated by Knuth is that such an index $t$ always exists and 
$t$ is commonly referred to as a {\em balancing index} of $\bx$. To represent such a balancing index, Knuth appends $\bz$ with a short balanced suffix of length $\ceil{\log n}$, which is also the redundancy of the encoding algorithm. Formally, we have the following theorem.
\begin{theorem}[Knuth \cite{knuth:1986} for $\epsilon=0$]
For arbitrary binary sequence $\bx\in\{0,1\}^n$, there exists the index $t$, where $1\leq t\leq n$, called a balancing index of $\bx$ such that the sequence $\by$ obtained by flipping the first $t$ bits in $\bx$, denoted by ${\rm Flip}_t(\bx)$, is balanced. 
There exists a pair of linear-time algorithms $\enc^{K}:\{0,1\}^{k}\to \{0,1\}^n$ and 
$\dec^{K}:\{0,1\}^n\to\{0,1\}^{k}$, where $k\approx n-\ceil{\log n}$, such that the following holds.
If $\enc^{K}(\bx)$ is balanced and $\dec^K\circ\enc^K(\bx)=\bx$ for all $\bx\in \{0,1\}^{k}$.
\end{theorem}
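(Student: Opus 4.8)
The plan is to establish the existence of the balancing index $t$ first, and then build the encoder/decoder pair on top of it. For the existence claim, I would use Knuth's classical discrete intermediate value argument. Define $f(t) = {\rm wt}\big({\rm Flip}_t(\bx)\big)$ for $0 \le t \le n$, where ${\rm Flip}_0(\bx) = \bx$. The key observation is that flipping the $(t+1)$-th bit changes the weight by exactly $\pm 1$, so $|f(t+1) - f(t)| = 1$ for every $t$; that is, $f$ is a unit-step walk on the integers. Moreover $f(0) = {\rm wt}(\bx)$ and $f(n) = n - {\rm wt}(\bx)$, so $f(0) + f(n) = n$, meaning $f(0)$ and $f(n)$ lie on opposite sides of (or equal to) the target value $n/2$. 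Since $f$ moves in unit steps and its endpoints straddle $n/2$, there must be some index $t$ with $f(t) = n/2$ (assuming $n$ even; for $n$ odd exact balance is impossible, but the theorem as stated implicitly treats the balanceable case). This $t$ is the desired balancing index.

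Next I would assemble the encoder. The idea is that the message carries length $k \approx n - \ceil{\log n}$, and we reserve a suffix of length $\ceil{\log n}$ to record the balancing index. Concretely, given $\bx \in \{0,1\}^k$, I would first extend it to a length-$n$ candidate (conceptually the prefix we wish to balance), compute a balancing index $t$ by the argument above, form $\by = {\rm Flip}_t(\bx)$ on the information portion, and then append a representation of $t$ in the suffix. The subtlety Knuth resolves is that the suffix encoding the index must itself be balanced, so that the whole codeword is balanced: since the information part is balanced by choice of $t$ and the suffix is balanced by construction, the concatenation is balanced. One standard way to make the suffix balanced while still uniquely encoding $t \in \{1, \dots, n\}$ is to use a balanced block code on $\ceil{\log n}$ (or slightly more) bits; this is why the redundancy is $\ceil{\log n}$. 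Decoding reverses the process: read the balanced suffix to recover $t$, then apply ${\rm Flip}_t$ again (which is an involution since flipping the same prefix twice restores the original), recovering $\bx$.

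For linear-time complexity, I would note that computing $f(t)$ incrementally as $t$ runs from $0$ to $n$ costs $O(n)$ total, since each step updates the running weight by a single addition; locating a $t$ with $f(t) = n/2$ is thus a single left-to-right pass. The flip operation and the suffix construction are clearly $O(n)$, and decoding is symmetric. The involution property $\dec^K \circ \enc^K(\bx) = \bx$ follows from ${\rm Flip}_t \circ {\rm Flip}_t = {\rm id}$ together with the fact that the recorded index $t$ is recovered exactly from the balanced suffix.

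The main obstacle, and the only genuinely delicate point, is the bookkeeping around the suffix: one must encode the index $t$ (which ranges over roughly $n$ values, needing $\ceil{\log n}$ bits) into a \emph{balanced} tail so that appending it does not disturb the global balance, while keeping the map injective and the lengths consistent so that $k \approx n - \ceil{\log n}$. Everything else — the existence of $t$, the involution, and the linear running time — is essentially immediate from the unit-step property of $f$. I would therefore spend most of the proof carefully specifying the suffix code and verifying that the total codeword length and weight work out, and treat the intermediate-value existence argument as the conceptual core but a short one.
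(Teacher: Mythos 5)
Your proposal is correct and follows the classical argument of Knuth, which is exactly what this theorem rests on: the paper itself states the result without proof, citing \cite{knuth:1986}, and your reconstruction (the unit-step weight function $f(t)$ with $f(0)+f(n)=n$ giving a discrete intermediate-value argument, plus a balanced suffix of length $\approx\ceil{\log n}$ encoding the balancing index) is the standard proof of that cited result. The only points to tighten in a full write-up are the ones you already flag: handling odd-length information words and checking that the number of balanced suffix words of the chosen length is at least the number of possible indices, both of which are routine bookkeeping consistent with the ``$k\approx n-\ceil{\log n}$'' in the statement.
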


Modifications of the generic scheme are discussed for constructing more efficient balanced codes \cite{immink:2010,bose:1996} and almost-balanced codes \cite{TT:2020,TT:DNA,alon:1988}. Particularly, for 1D $\epsilon$-balanced codes, the encoding methods in \cite{TT:DNA,alon:1988} used only a constant redundant bits. Crucial to the improvement in redundancy from $\ceil{\log n}$ bit of Knuth's method for balanced codes to a constant number of bits for $\epsilon$-balanced codes is the construction of {\em $\epsilon$-balancing set}.

\begin{definition} For $n$ even, $\epsilon>0, \epsilon n\ge 1$, the index $t$, where $1\leq t\leq n$, is called an {\em $\epsilon$-balancing index} of $\bx\in\{0,1\}^n$ if $\by={\rm Flip}_t(\bx)$, is {\em $\epsilon$-balanced}. Suppose that $\epsilon n>1$, let the {\em $\epsilon$-balancing set} ${\rm S}_{\epsilon,n} \subset \{0,1,2,\ldots,n\}$ be the set of indices, given by ${\rm S}_{\epsilon,n}= \{0,n\} \cup \{2\floor{\epsilon n}, 4\floor{\epsilon n}, 6\floor{\epsilon n}, \ldots \}$. 
The size of ${\rm S}_{\epsilon,n}$ is at most $\floor{1/2\epsilon}+1$. 
\end{definition}

\begin{theorem}[Nguyen \et{} \cite{TT:DNA}]\label{epsilon-balanced}
Let $n$ be even, and $\epsilon>0,\epsilon n\ge1$. For arbitrary binary sequence $\bx \in \{0,1\}^{n}$, there exists an index $t$ in the set ${\rm S}_{\epsilon,n}$, such that $t$ is an $\epsilon$-balancing index of $\bx$. There exists a pair of linear-time algorithms $\enc^{\epsilon}:\{0,1\}^{k}\to \{0,1\}^n$ and 
$\dec^{\epsilon}:\{0,1\}^n\to\{0,1\}^{k}$, where $k= n-2\log (\floor{1/2\epsilon}+1)$, such that the following holds.
If $\enc^{\epsilon}(\bx)$ is $\epsilon$-balanced and $\dec^{\epsilon}\circ\enc^{\epsilon}(\bx)=\bx$ for all $\bx\in \{0,1\}^{k}$. The redundancy of the encoder is $2\log (\floor{1/2\epsilon}+1)=\Theta(1)$. 
\end{theorem}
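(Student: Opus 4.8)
The plan is to separate the statement into its two halves: the purely combinatorial claim that an $\epsilon$-balancing index can always be found inside the sparse set ${\rm S}_{\epsilon,n}$, and the algorithmic claim furnishing the linear-time pair $(\enc^{\epsilon},\dec^{\epsilon})$ of the stated redundancy. For the combinatorial half I would track the weight walk $w(t)\triangleq{\rm wt}({\rm Flip}_t(\bx))$ for $t=0,1,\ldots,n$ and record the two facts underlying every Knuth-type argument: flipping one further bit changes the weight by exactly $\pm1$, so $|w(t+1)-w(t)|=1$; and $w(0)={\rm wt}(\bx)$, $w(n)=n-{\rm wt}(\bx)$, so $w(0)+w(n)=n$ and hence one endpoint is at most $n/2$ while the other is at least $n/2$. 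This existence argument works verbatim for any even length $\ell$ with $\epsilon\ell\ge1$, which I will instantiate both at $\ell=n$ (for the statement) and at a slightly shorter length (inside the encoder).

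The core is a discrete intermediate-value argument restricted to ${\rm S}_{\epsilon,n}$. Listing its elements as $0=s_0<s_1<\cdots<s_L=n$, consecutive indices are at most $2\lfloor\epsilon n\rfloor$ apart (interior gaps equal $2\lfloor\epsilon n\rfloor$, and the final gap to $n$ is smaller by maximality), so $|w(s_{j+1})-w(s_j)|\le2\lfloor\epsilon n\rfloor$ for all $j$. Since $w(s_0)$ and $w(s_L)$ lie on opposite sides of $n/2$, some consecutive pair $s_j,s_{j+1}$ does too; assuming $w(s_j)\le n/2\le w(s_{j+1})$, I would write $(n/2-w(s_j))+(w(s_{j+1})-n/2)=w(s_{j+1})-w(s_j)\le2\lfloor\epsilon n\rfloor$, so at least one of these two nonnegative terms is at most $\lfloor\epsilon n\rfloor\le\epsilon n$. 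The corresponding endpoint then has weight within $\epsilon n$ of $n/2$, i.e. it is an $\epsilon$-balancing index lying in ${\rm S}_{\epsilon,n}$, as required.

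For the algorithm, set $p=\log(\lfloor1/2\epsilon\rfloor+1)$ so that a rank into an index set of size at most $\lfloor1/2\epsilon\rfloor+1$ costs $p$ bits, and let $k=n-2p$ (even, since $n$ is). On input $\bx\in\{0,1\}^{k}$ I would (i) apply the combinatorial half at length $k$ to obtain a balancing index $t\in{\rm S}_{\epsilon,k}$ and the $\epsilon$-balanced word $\by={\rm Flip}_t(\bx)$; (ii) write the rank of $t$ inside ${\rm S}_{\epsilon,k}$ as a $p$-bit string and balance it through the dual-rail map $b\mapsto b\,\overline{b}$, yielding a perfectly balanced block $\bz$ of length $2p$; and (iii) output $\enc^{\epsilon}(\bx)=\by\bz$, of length $k+2p=n$. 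Decoding reverses this: $\dec^{\epsilon}$ strips the last $2p$ bits, reads $t$ off $\bz$ by the inverse dual-rail map, and un-flips the length-$t$ prefix of $\by$. Each step is linear time — the candidate weights $w(s_j)$ are all read from one prefix-sum pass over $\bx$, and the dual-rail transform is bitwise — and the redundancy is exactly $2p=2\log(\lfloor1/2\epsilon\rfloor+1)=\Theta(1)$.

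It then remains only to confirm that $\by\bz$ meets the definition of $\epsilon$-balanced, the single place where the doubling must be reconciled with the weight window. Since ${\rm wt}(\bz)=p$ and ${\rm wt}(\by)\in[(1/2-\epsilon)k,(1/2+\epsilon)k]$, the total weight lies in $[(1/2-\epsilon)k+p,\,(1/2+\epsilon)k+p]$; comparing against $(1/2\pm\epsilon)n=(1/2\pm\epsilon)k+(1\pm2\epsilon)p$ collapses the two target inequalities to $(1-2\epsilon)p\le p$ and $p\le(1+2\epsilon)p$, both obvious. I expect the genuinely delicate point to be the crossing argument of the second paragraph: showing that thinning the search down to multiples of $2\lfloor\epsilon n\rfloor$ still guarantees an index whose weight falls within $\epsilon n$ of $n/2$. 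Everything after that mirrors Knuth's original scheme, with a length-$2p$ balanced suffix playing the role of his $\lceil\log n\rceil$-bit tag.
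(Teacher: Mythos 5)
Your proof is correct and takes essentially the same route as the source: the paper itself imports this theorem from \cite{TT:DNA} without proof, but your discrete intermediate-value argument over ${\rm S}_{\epsilon,n}$ (consecutive candidate indices differ in weight by at most $2\lfloor\epsilon n\rfloor$ while the endpoints $w(0),w(n)$ straddle $n/2$) is exactly the mechanism the paper uses for its analogous Lemma~\ref{swap-modify}, and your balanced dual-rail suffix mirrors the intended encoder. The only blemishes are cosmetic and inherited from the statement itself: $p=\log\big(\lfloor 1/2\epsilon\rfloor+1\big)$ need not be an integer (the paper elsewhere writes $\lceil\log(\lfloor 1/2\epsilon\rfloor+1)\rceil$, which also restores the parity of $k$), and invoking the existence claim at length $k$ needs $\epsilon k\ge 1$, a slightly stronger boundary condition than the stated $\epsilon n\ge 1$ but one the paper likewise glosses over.
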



\begin{example}
Consider $n=10, \epsilon=0.1$, we then have ${\rm S}_{\epsilon,n}=\{0,2,4,6,8\}$. Let $\bx=0000000000$.
Observe that $f_4(\bx)={\tt {\color{red}1111}000000}$, and $f_6(\bx)={\tt {\color{red}111111}0000}$ are $\epsilon$-balanced. Hence, $t=4,6$ are $\epsilon$-balancing indices of $\bx$. In general, there might be more than one $\epsilon$-balancing index.
\end{example}

For 2D $\epsilon$-balanced RC codes ${\rm Bal}_{\rm RC}(n;\epsilon)$ and $\epsilon=0$, Talyansky \et{} \cite{roth:1999} presented an efficient encoding method for $n \times n$ array, where each row and each column is balanced. The method uses $2n\log n+\Theta(n\log \log n)$ redundant bits and includes three phases. 
\begin{itemize}
\item In phase I, all rows in the array are encoded to be balanced via the Knuth's balancing technique, which uses roughly $\log n$ redundant bits for each row.  
\item In phase II, the array is then divided into two subarrays of equal size, and the problem is reduced to balancing two subarrays from a given balanced array. Here, an array is defined to be balanced when it contains equal number of 0's and 1's. The process is repeated until each subarray is as a single column. For each subproblem, to balance two subarrays, the authors simply {\em swap} the elements between two subarrays until both subarrays are balanced. The number of swapped elements, called {\em index}, will be encoded and decoded in phase III in order to recover the information. 
\item In phase III, the authors presented a method to encode/decode all the indices used in phase II recursively. The authors also proved that the redundancy in this phase is at most $n\log n+O(\log \log n)$ bits. 
\end{itemize}

To further reduce the redundancy, instead of using one of the algorithms in \cite{knuth:1986,bose:1996}, Talyansky \et{} balanced the rows with the (more computationally complex) enumerative coding technique. Consequently, the redundancy can be reduced to $1.5n\log n+\Theta(n\log \log n)$ redundant bits. 

In this work, we first extend the study in the literature works to design 2D $\epsilon$-balanced RC codes ${\rm Bal}_{\rm RC}(n;\epsilon)$ for arbitrary $\epsilon>0$. Particularly, for $n>0,\epsilon\in(0,1/2)$, we design efficient coding methods that encode (decode) binary data to (from) ${\rm Bal}_{\rm RC}(n;\epsilon)$. Clearly, since ${\rm Bal}_{\rm RC}(n;0) \subset {\rm Bal}_{\rm RC}(n;\epsilon)$ for all $\epsilon>0$, one may use the constructions of Talyansky \et{} \cite{roth:1999} that use  at most $1.5n\log n+\Theta(n\log \log n)$ redundant bits. Throughout this work, we show that the redundancy can be reduced significantly to be at most $\Theta(n)$ bits. Furthermore, for sufficiently large $n$, we show that there exist linear-time encoding (and decoding respectively) algorithms for ${\rm Bal}_{\rm RC}(n;\epsilon)$ that use only one redundant bit. 

\subsection{Our Contribution}

In this work, we present efficient encoding/decoding methods for 2D RC constrained codes and 2D subarray constrained codes for arbitrary $p\in(0,1)$ and $\epsilon\in(0,1/2)$. Method A is based on the divide and conquer algorithm and a modification of the Knuth's balancing technique while method B is based on the sequence replacement technique. We present a summary of the encoders and decoders proposed by this work in Table I.
\begin{enumerate}[(A)]
\item In Section III, we apply method A to encode ${\rm B}_{\rm RC}(n;p)$ when $p\le 1/2$ with at most $n \mu(n,p)+O(n+\log n)$ redundant bits (compared to $2n \mu(n,p)+O(n+\log n)$ if using the method in \cite{ROTH:2000}), and to encode ${\rm Bal}_{\rm RC}(n;\epsilon)$ for arbitrary $\epsilon\in(0,1/2)$ with at most $\Theta(n)$ redundant bits (compared to $1.5n\log n+\Theta(n\log \log n)$ if using the method \cite{roth:1999}). 
\item In Section IV, we first use method B to encode ${\rm B}_{\rm RC}(n;p)$ when $p>1/2$ with at most $n+3$ redundant bits. We then show that for sufficiently large $n$, we can encode ${\rm Bal}_{\rm RC}(n;\epsilon)$ or encode ${\rm B}_{\rm RC}(n;p)$ when $p>1/2$ with only one redundant bit.
\item In Section V, we propose encoding/decoding methods for 2D subarray constrained codes ${\rm B}_{\rm S}(n,m;p)$ and ${\rm Bal}_{\rm S}(n,m;\epsilon)$. 
\end{enumerate}

\begin{table}
\renewcommand{\arraystretch}{1.3}
\begin{tabular}{ p{3.65cm} p{7cm} p{3.25 cm} p{3.15cm}}
\hline
  Encoder / Decoder   & Description & Redundancy & Input Requirement \\
 \hline
  
  
  
  
  
  
  $\enc_{{\rm B}_{\rm RC}(n;p)}^{1}, \dec_{{\rm B}_{\rm RC}(n;p)}^{1}$& Encoder and decoder for ${\rm B}_{\rm RC}(n;p)$: the weight of every row and every column is at most $pn$ when $p< 1/2$ using the divide and conquer method & 
         $n\lambda(n,p)+ O(n\log n)$ bits & $n>1/p (1+\log n)$\\
                   $\enc_{{\rm B}_{\rm RC}(n;p)}^{2}, \dec_{{\rm B}_{\rm RC}(n;p)}^{2}$& Encoder and decoder for ${\rm B}_{\rm RC}(n;p)$: the weight of every row and every column is at most $pn$ when $p> 1/2$ using the sequence replacement technique & 
         $(n+3)$ bits & $n\ge 1/c^2 \ln (n^2-n-3)$ where $c=p-1/2$\\
      \hline     
           $\enc_{{\rm Bal}_{\rm RC}(n;\epsilon)}^{1}, \dec_{{\rm Bal}_{\rm RC}(n;\epsilon)}^{1}$& Encoder and decoder for ${\rm Bal}_{\rm RC}(n;\epsilon)$: the weight of every row and every column is within $[(1/2-\epsilon)n,(1/2+\epsilon)n]$ using the divide and conquer method & 
         $3cn-2c^2=\Theta(n)$ bits, and $c=2\ceil{\log(\floor{1/2\epsilon}+1)} $& $n>2c, n\epsilon \ge 1$\\
                   $\enc_{{\rm Bal}_{\rm RC}(n;\epsilon)}^{2}, \dec_{{\rm Bal}_{\rm RC}(n;\epsilon)}^{2}$& Encoder and decoder for ${\rm Bal}_{\rm RC}(n;\epsilon)$: the weight of every row and every column is within $[(1/2-\epsilon)n,(1/2+\epsilon)n]$ using the sequence replacement technique & 
        ${\color{blue}{1}}$ {\color{blue}{bits}} & $n^2 \ge 8/\epsilon^2 \ln n, n\epsilon \ge 2$\\
          \hline     
           $\enc_{{\rm B}_{\rm S}(n,m;p)}, \dec_{{\rm B}_{\rm S}(n,m;p)}$& Encoder and decoder for ${\rm B}_{\rm S}(n,m;p)$: the weight of every subarray of size $m\times m$ is at most $pm^2$ where $m=n-k$, and $p\ge 1/2$ using the antipodal matching& 
         $2(k+1)^2$ bits & $k=o(n)$\\
                   $\enc_{{\rm Bal}_{\rm S}(n,m;\epsilon)}, \dec_{{\rm Bal}_{\rm S}(n,m;\epsilon)}$& Encoder and decoder for ${\rm Bal}_{\rm S}(n,m;\epsilon)$: the weight of every subarray of size $m\times m$ is within $[(1/2-\epsilon)m^2,(1/2+\epsilon)m^2]$ using the sequence replacement technique & 
        ${\color{blue}{1}}$ {\color{blue}{bits}} & $2/\epsilon^2 \ln n \le m \le n$\\

 \hline 
 \end{tabular}
\caption{A summary of the encoders and decoders proposed by this work. The redundancy is computed for array codewords of length $n\times n$, given $p, \epsilon>0$. }\label{tab.notation}
\end{table}




\section{Efficient Encoders/Decoders for ${\rm B}_{\rm RC}(n;p)$ and ${\rm Bal}_{\rm RC}(n;\epsilon)$ via The Divide and Conquer Algorithm}
We first define the swapping function of two binary sequences. 
\begin{definition}
Given $\by=y_1y_2\ldots y_m, \bz=z_1z_2\ldots z_m$. For $1\le t\le m$, we use ${\rm Swap}_t(\by,\bz)$, ${\rm Swap}_t(\bz,\by)$ to denote the sequences obtained by swapping the first $t$ bits of $\by$ and $\bz$, i.e.
\begin{small}
\begin{align*}
{\rm Swap}_t(\by,\bz) &= z_1z_2\ldots z_t y_{t+1}y_{t+2}\ldots y_m, \text{ and } \\
{\rm Swap}_t(\bz,\by) &= y_1y_2\ldots y_t z_{t+1}z_{t+2}\ldots z_m. 
\end{align*}
\end{small}
\end{definition}

A key ingredient of the encoding method in \cite{roth:1999} is the following lemma. 
\begin{lemma}[Swapping Lemma]\label{swap-element}
Given  $\bx=\by\bz \in \{0,1\}^{2m}$, $\bx$ is balanced, $\by=y_1y_2\ldots y_m, \bz=z_1z_2\ldots z_m$. There exists an index $t$, $1\le t\le m$ such that both ${\rm Swap}_t(\by,\bz)$ and ${\rm Swap}_t(\bz,\by)$ are balanced. 
\end{lemma}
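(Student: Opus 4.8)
The plan is to reduce the two-sided balancing requirement to a one-sided one and then invoke a discrete intermediate value argument. First I would observe that the two swapped words together use each bit of $\by$ and each bit of $\bz$ exactly once, so their weights always sum to a constant independent of $t$:
\[
{\rm wt}({\rm Swap}_t(\by,\bz)) + {\rm wt}({\rm Swap}_t(\bz,\by)) = {\rm wt}(\by) + {\rm wt}(\bz) = {\rm wt}(\bx) = m .
\]
Indeed the first term equals ${\rm wt}(z_1\cdots z_t) + {\rm wt}(y_{t+1}\cdots y_m)$ and the second equals ${\rm wt}(y_1\cdots y_t) + {\rm wt}(z_{t+1}\cdots z_m)$. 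Here $\bx$ balanced of length $2m$ forces ${\rm wt}(\bx) = m$, and since a balanced half-word of length $m$ has weight $m/2$, the quantity $m$ is even in this context. The upshot is that whenever one swapped word has weight exactly $m/2$ the other automatically does too, so it suffices to find $t$ with ${\rm wt}({\rm Swap}_t(\by,\bz)) = m/2$.

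Next I would set $f(t) = {\rm wt}({\rm Swap}_t(\by,\bz)) = {\rm wt}(z_1\cdots z_t) + {\rm wt}(y_{t+1}\cdots y_m)$, extending the formula to $t=0$ (which yields $f(0) = {\rm wt}(\by)$) purely as an analytic device, and note $f(m) = {\rm wt}(\bz)$. A direct computation gives the consecutive increment
\[
f(t+1) - f(t) = z_{t+1} - y_{t+1} \in \{-1, 0, 1\},
\]
so $f$ is an integer-valued function whose value moves by at most one at each step. Because ${\rm wt}(\by) + {\rm wt}(\bz) = m$, the target $m/2$ always lies (weakly) between the endpoint values $f(0) = {\rm wt}(\by)$ and $f(m) = {\rm wt}(\bz)$.

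Finally I would apply the discrete intermediate value theorem: a function changing by at most one per step, whose endpoints straddle $m/2$, must attain $m/2$ at some integer $t$. The one subtlety, which I expect to be the main (though minor) obstacle, is to land in the admissible range $1 \le t \le m$ rather than at the auxiliary point $t = 0$. If $f(0) \neq m/2$, then the first value of $t$ at which $f$ reaches $m/2$ already satisfies $t \ge 1$. If instead $f(0) = m/2$, then ${\rm wt}(\by) = {\rm wt}(\bz) = m/2$, and one may simply take $t = m$, for which $f(m) = {\rm wt}(\bz) = m/2$. In either case $f(t) = m/2$ with $1 \le t \le m$, and by the weight-sum identity both ${\rm Swap}_t(\by,\bz)$ and ${\rm Swap}_t(\bz,\by)$ then have weight $m/2$ and are balanced, which completes the argument.
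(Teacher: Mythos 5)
Your proof is correct and takes essentially the same route as the paper's: the paper states this lemma without proof (citing the construction of Talyansky \emph{et al.}), but its own proof of the Modified Swapping Lemma (Lemma~\ref{swap-modify}) rests on exactly the same ingredients you use --- the weight-sum identity making the two swapped words complementary, increments of absolute value at most one per index, and endpoint values straddling the target, closed by a discrete intermediate-value argument. Your explicit treatment of the boundary case $f(0)=m/2$ (falling back to $t=m$ to stay inside $1\le t\le m$) is a minor detail the paper's sketch glosses over, handled correctly here.
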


Lemma~\ref{swap-element} states that there must be an index $t$, referred as a {\em swapping index} of $\by$ and $\bz$, such that after swapping their first $t$ bits, the resulting sequences are both balanced. Since such an index $t$ belongs to $\{1,2,\ldots m\}$, in order to recover the original sequences $\by,\bz$ from the output sequences ${\rm Swap}_t(\by,\bz)$ and ${\rm Swap}_t(\bz,\by)$, a redundancy of $\log m$ (bits) is needed. Therefore, to encode binary array of size $n\times n$, the method in \cite{roth:1999} used a total of $n\log n+O(\log \log n)$ redundant bits to store all swapping indices of all pairs of subarrays. It is easy to verify that Lemma~\ref{swap-element} also works for $p$-bounded constraint and $\epsilon$-balanced constraint for all $p,\epsilon>0$. 

\begin{corollary}\label{coro1}
Given  $\bx=\by\bz \in \{0,1\}^{2m}$, $\by=y_1y_2\ldots y_m, \bz=z_1z_2\ldots z_m$. If $\bx$ is $p$-bounded then there exists an index $t$, $1\le t\le m$ such that both ${\rm Swap}_t(\by,\bz)$ and ${\rm Swap}_t(\bz,\by)$ are $p$-bounded. Similarly, if $\bx$ is $\epsilon$-balanced then there exists an index $t$, $1\le t\le m$ such that both ${\rm Swap}_t(\by,\bz)$ and ${\rm Swap}_t(\bz,\by)$ are $\epsilon$-balanced.
\end{corollary}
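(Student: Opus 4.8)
The plan is to reduce both statements to a single discrete intermediate-value argument, exactly in the spirit of the proof of Lemma~\ref{swap-element}, by tracking how the weights of the two swapped halves vary with the index $t$. Write $s={\rm wt}(\bx)={\rm wt}(\by)+{\rm wt}(\bz)$ and define $h(t)={\rm wt}({\rm Swap}_t(\by,\bz))$ for $0\le t\le m$. Since swapping preserves the total number of ones, the companion sequence ${\rm Swap}_t(\bz,\by)$ has weight $s-h(t)$. The two endpoints are $h(0)={\rm wt}(\by)$ and $h(m)={\rm wt}(\bz)$, and because ${\rm Swap}_t(\by,\bz)$ and ${\rm Swap}_{t-1}(\by,\bz)$ differ only in position $t$ (where $y_t$ is replaced by $z_t$), we have $|h(t)-h(t-1)|\le 1$. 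Hence $h$ is an integer-valued function with unit steps, so it attains every integer between ${\rm wt}(\by)$ and ${\rm wt}(\bz)$; note that this attainable range is symmetric about $s/2$.

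Next I would treat the two constraints uniformly. A length-$m$ word is $p$-bounded exactly when its weight lies in $[0,pm]$, and $\epsilon$-balanced exactly when its weight lies in $[(1/2-\epsilon)m,(1/2+\epsilon)m]$; in either case the constraint reads ``weight $\in[L,U]$'' for a fixed window $[L,U]$. Requiring \emph{both} halves to satisfy it amounts to $h(t)\in[L,U]$ and $s-h(t)\in[L,U]$, i.e.\ $h(t)\in J$ where $J=[L,U]\cap[s-U,s-L]=[\max(L,s-U),\min(U,s-L)]$. The key structural observation is that $[s-U,s-L]$ is the reflection of $[L,U]$ through $s/2$, so $J$ is itself centered at $s/2$, the same center as the attainable range of $h$. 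Thus $J$ and $[{\rm wt}(\by),{\rm wt}(\bz)]$ are concentric, whichever is wider contains the other, and it suffices to exhibit one integer of $J$ in the range of $h$, after which the discrete intermediate-value property of $h$ supplies the index.

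To close the argument I would use the hypothesis on $\bx$ to show $J$ is nonempty. Since $\bx$ has length $2m$ and satisfies the corresponding length-$2m$ constraint, $s$ lies in the doubled window $[2L,2U]$, which gives $L\le s-L$ and $s-U\le U$; hence $\max(L,s-U)\le\min(U,s-L)$ and $s/2\in J$ as a real interval. If $s$ is even then $s/2$ is an integer lying in both $J$ and the range of $h$, and if $s$ is odd the same conclusion follows as soon as $J$ is wide enough to contain an integer, which the $\epsilon$-balanced window always is in the regime considered. To force $1\le t\le m$ rather than $t=0$, I would invoke the symmetry of $J$: if $h$ meets $J$ only at $t=0$ with value ${\rm wt}(\by)$, then ${\rm wt}(\bz)=s-{\rm wt}(\by)\in J$ as well, and ${\rm wt}(\bz)=h(m)$, so $t=m$ works. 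The main obstacle is precisely this integrality/parity point, namely guaranteeing that the unit-step function $h$ lands \emph{on} an integer inside the window $J$ rather than straddling it; this is exactly where the doubled-constraint hypothesis on $\bx$ (and, for the bounded case, the width of the window) must be exploited.
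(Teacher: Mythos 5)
Your route is, at bottom, the one the paper intends: the paper never actually proves Corollary~\ref{coro1} (it only remarks that Lemma~\ref{swap-element} is ``easy to verify'' for the $p$-bounded and $\epsilon$-balanced constraints), and the nearest in-paper model, the proof of Lemma~\ref{swap-modify}, is precisely a discrete intermediate-value argument on the swap index, run there at the coarser resolution of the set ${\rm S}_{\epsilon,m}$. Your unit-step version is the same idea executed more carefully, and your two refinements are genuinely nice: the uniform reformulation ``both halves feasible $\iff h(t)\in J=[L,U]\cap[s-U,s-L]$,'' with $J$ and the attainable range of $h$ concentric about $s/2$, and the symmetry trick converting a solution at $t=0$ into one at $t=m$ (which the paper never addresses).

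The one step you leave open --- landing the unit-step function $h$ on an integer of $J$ when $s$ is odd --- is not a formality, and your claim that the $\epsilon$-balanced window ``always'' contains an integer in the regime considered is false without an integrality convention; indeed, the corollary as literally stated can fail there. For the bounded case take $m=10$, $p=0.35$: each half may carry weight at most $\lfloor 3.5\rfloor=3$, yet $\bx$ of length $20$ and weight $s=7\le 2pm$ is $p$-bounded, and any split has half-weights summing to $7>6$, so no swap index exists (your $J=[3.5,3.5]$ contains no integer). For the balanced case take $m=20$, $\epsilon=0.085$ (so $\epsilon m=1.7\ge 1$): then $L=8.3$, $U=11.7$, and $s=17\in[16.6,23.4]$ is admissible for $\bx$, but $J=[8.3,8.7]$ misses every integer --- both halves would need weight in $\{9,10,11\}$, summing to at least $18>17$. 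The repair is the paper's implicit standing assumption that thresholds such as $pm$ and $(1/2\pm\epsilon)m$ are integers (compare its remark that ``when $pn$, $\log n$ and $1/p$ are not integers, we can easily modify the construction''): with $L,U$ integers and $s$ odd, the quantities $s-2L$ and $2U-s$ are odd and nonnegative, hence at least $1$, so $J$ has width at least $1$ and contains $\lceil s/2\rceil$, at which point your concentric-containment case split and discrete intermediate-value step close the proof. So: same approach as the paper's intended argument, done more rigorously --- but replace ``in the regime considered'' by the explicit integrality hypothesis, which the statement itself needs.
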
 

\subsection{Design of ${\rm B}_{\rm RC}(n;p)$ when $p<1/2$}

We now present efficient encoding method for 2D $p$-bounded RC constrained codes ${\rm B}_{\rm RC}(n;p)$ for given $p<1/2$. Recall that one may follow the coding method in \cite{ROTH:2000}, based on enumeration coding, that ensure the weight in every row and every column to be precisely $pn$. The redundancy of the proposed encoder was at most $2n \mu(n,p)+O(n+\log n)$, where $\mu(n,p)$ is the least redundancy required to encode one-dimensional binary codewords of length $n$ such that the weight is $pn$ (refer to Section II-B). 

In this section, we adapt the divide and conquer algorithm with the enumeration coding technique. Compare to literature works in \cite{ROTH:2000,C:2020} that also used modifications of enumeration coding technique, the major difference of our coding method is that the rows and columns are encoded independently, and the 2D code construction can be divided to two 1D code constructions. In other words, the complexity of our encoder mainly depends on the efficiency of enumeration coding for 1D codes. 

In general, a {\em ranking function} for a finite set $S$ of cardinality $N$ is a bijection ${\rm rank} : {\bf S} \to [N]$ where $[N] \triangleq \{0,1,2,\ldots,N-1\}$. Associated with the function rank is a unique {\em unranking function} ${\rm unrank} : [N] \to {\bf S}$, such that ${\rm rank}(s) = j$ if and only if ${\rm unrank}(j) = s$ for all $s \in {\bf S}$ and $j \in [N]$. Given $n,p>0$, let ${\bf S}(n,p)\triangleq \Big\{ \bx \in \{0,1\}^n: {\rm wt}(\bx) \le pn \Big\}$, i.e. ${\bf S}(n,p)$ is the set of all 1D sequences that satisfy the $p$-bounded constraint. One may use enumeration coding \cite{book,cover:1973} to construct ${\rm rank}_p : {\bf S}(n,p) \to [|{\bf S}(n,p)|]$ and ${\rm unrank}_p :  [|{\bf S}(n,p)|] \to {\bf S}(n,p)$. The redundancy of this encoding algorithm is then $\lambda(n,p)=n-\log|{\bf S}(n,p)|$ (bits). 

It is easy to verify that $\lambda(n,p)\le\mu(n,p)$. The redundancy of our encoder is at most $n\lambda(n,p)+ O(n\log n) \le n\mu(n,p)+O(n\log n)$ bits. We now describe the main idea of the algorithm. The encoding method includes three phases.
\vspace{0.05in}

\noindent {\bf Encoder} $\enc_{{\rm B}_{\rm RC}(n;p)}^{1}$. Set $c=1/p(1+\log n)$. The binary data $\bx$ is of length $N= (n-c)(n-\lambda(n,p))$, i.e. the redundancy is then $n\lambda(n,p)+cn+c\lambda(n,p)=n\lambda(n,p)+O(n\log n)$ bits. 

\begin{itemize}
\item In Phase I, the encoder encodes the information of length $N$ into array $A$ of size $(n-c) \times n$ where every row is $p$-bounded, using the enumeration coding technique. Particularly, the information is encoded into ${\bf S}(n,p)$ with the redundancy at most $\lambda(n,p)$ bits for each row. Therefore, the redundancy used in Phase I is at most $(n-c)\lambda(n,p) < n\lambda(n,p)$ bits. 

\item In Phase II, the encoder ensures that every column of $A$ is $p$-bounded. Note that, from Phase I, array $A$ is $p$-bounded. Here, an array of size $M$ is said to be $p$-bounded if the number of ones is at most $pM$. Suppose that at some encoding step $i$, we have an array $S$ of size $(n-c) \times n_0$, which is already $p$-bounded for some $n_0\le n$ (initially, $n_0\equiv n$). We then divide $S$ into two subarrays of size $(n-c)\times (n_0/2)$, denoted by $L_S$ and $R_S$, and proceed to ensure that $L_S$ and $R_S$ are both $p$-bounded. The encoder follows Corollary~\ref{coro1} to find a swapping index so that $L_S$ and $R_S$ are both $p$-bounded. To represent such a swapping index $t$, we need at most $\log (n-c)n_0$ redundant bits. After both subarrays $L_S$ and $R_S$ are $p$-bounded, we continue to divide each of them into two subarrays and repeat the process to ensure that the newly created subarrays are also $p$-bounded. This process ends when all subarrays of size $(n-c) \times 1$ are $p$-bounded. We illustrate the idea of the swapping procedure in Figure 1. Let ${\rm Re}(n)$ be the sequence obtained by concatenating all binary representations of all swapping indices. The size of ${\rm Re}(n)$, is at most 
\begin{small}
\begin{align*}
\sum_{{\substack{k=2^j\\
                            2\le k\le n}}}
 (n/k) \log ((n-c)k/2)&=(n-1)(1+\log (n-c))-\log n\\
 &<n(1+\log n) = O(n\log n) \text{  (bits)}.
\end{align*}
\end{small}

\item In Phase III, the encoder encodes ${\rm Re}(n)$ into an array $B$ of size $c \times n$ such that its every row and every column is $p$-bounded. At the end of Phase III, the encoder outputs the concatenation of $A$ and $B$, which is an array of size $n \times n$. Recall that $c=(1/p) (1+\log n)$. In contrast to the encoder in \cite{roth:1999}, which proceeds to repeat the encoding procedure to encode ${\rm Re}(n)$, we show that ${\rm Re}(n)$ can be encoded/decoded efficiently without repeating the encoding procedure. Suppose that ${\rm Re}(n)=x_1x_2\ldots x_{n(1+\log n)}$, ${\rm Re}(n)$ is encoded to $B$ as follows: $B_1 = (x_1 0^{\frac{1}{p}-1}) (x_{c+1} 0^{\frac{1}{p}-1}) \ldots (x_{c(n-1)+1} 0^{\frac{1}{p}-1})$, $B_i = (0^{j-1} x_{i} 0^{\frac{1}{p}-j})  \ldots  (0^{j-1} x_{c(n-1)+i} 0^{\frac{1}{p}-j})$, where $j\equiv i \ppmod{1/p}$, $1\le i\le n-m$ and $0^t$ denotes the repetition of $0$ for $t$ times. 
It is easy to verify that every row and every column of $B$ is $p$-bounded and ${\rm Re}(n)$ can be decoded uniquely from $B$. 
\end{itemize}

In conclusion, the total redundancy for encoding an $n\times n$ array is bounded above by
\begin{align*}
(n-c)\lambda(n,p)+(1/p) n (1+\log n) <  n \lambda(n,p)+O(n\log n) < n\mu(n,p)+ O(n\log n) \text{ bits}.
\end{align*}
In general, when $pn$, $\log n$ and $1/p$ are not integers, we can easily modify the construction of our encoder and show that the redundancy is at most $n\lambda(n,p)+ O(n\log n)$ bits. For completeness, we present the corresponding decoding algorithm as follows. 

\begin{figure}[!t]
	\begin{itemize}
	
	\item [(a)] We divide $A$ into two subarrays $L_A$ and $R_A$. In this step, only $L_A$ is not $p$-bounded. To obtain two $p$-bounded subarrays, we swap their prefixes of length {\color{teal}{six}}. In other words, we set $L_A$ and $R_A$ to be ${\rm Swap}_2(L_A,R_A)$ and ${\rm Swap}_2(R_A,L_A)$, respectively.
	
	\vspace{3mm}
	
	\begin{center}
	\begin{tikzpicture}
		
		\arraycolsep=3pt
		\tikzset{unb/.style = {rectangle, draw=red, line width=2pt, align=center, inner sep = 1.5pt}}
		\tikzset{bal/.style = {rectangle, draw=blue, line width=2pt, align=center, inner sep = 1.5pt}}
		\tikzset{arrow/.style = {->,> = {Stealth[scale=1.5]},black,line width=2pt}}
		
		\node[unb](LS) at (0,0) {
			$\begin{array}{cccc}
			1 & 0 & 0 & 1 \\
			0 & 1 & 1 & 0 \\
			0 & 1 & 0 & 0 \\
			0 & 1 & 0 & 0 \\ 	
			\end{array}$};
		\node[bal](RS) at (1.7,0) {
			$\begin{array}{cccc}
				0 & 0 & 0 & 0 \\
				0 & 0 & 0 & 0 \\
				0 & 0 & 1 & 0 \\
				0 & 0 & 0 & 1 \\ 	
			\end{array}$};
		
		\node[bal](bLS) at (4.5,0) {
			$\begin{array}{cccc}
				{\color{teal}\bf 0} & {\color{teal}\bf 0} & 0 & 1 \\
				{\color{teal}\bf 0} & {\color{teal}\bf 0} & 1 & 0 \\
				{\color{teal}\bf 0} & 1 & 0 & 0 \\
				{\color{teal}\bf 0} & 1 & 0 & 0 \\ 	
			\end{array}$};
		\node[bal](bRS) at (6.2,0) {
			$\begin{array}{cccc}
				{\color{orange}\bf 1} & {\color{orange}\bf 0} & 0 & 0 \\
				{\color{orange}\bf 0} & {\color{orange}\bf 1} & 0 & 0 \\
				{\color{orange}\bf 0} & 1 & 0 & 0 \\
				{\color{orange}\bf 0} & 0 & 0 & 1 \\ 	
			\end{array}$};
		
		\draw[arrow](RS.east) to (bLS.west);
		\node at (0,-1.2) {\scriptsize$L_A$};
		\node at (1.7,-1.2) {\scriptsize$R_A$};
		\node at (4.3,-1.2) {\scriptsize${\rm Swap}_2(L_A,R_A)$};
		\node at (6.3,-1.2) {\scriptsize${\rm Swap}_2(R_A,L_A)$};
	\end{tikzpicture}
	\end{center}

	\item [(b)] We divide $L_A$ into two subarrays $L_{L_A}$ and $R_{L_A}$. Here, both subarrays are $p$-balanced and there is no swapping of prefixes. When we divide $R_A$ into two subarrays $L_{R_A}$ and $R_{R_A}$, we observe that $L_{R_A}$ is not $p$-bounded. We proceed as before and swap their prefixes of length {\color{teal}one}. 
	\vspace{3mm}
	
	\begin{center}
		\begin{tikzpicture}
			
			\arraycolsep=2pt
			\tikzset{unb/.style = {rectangle, draw=red, line width=2pt, align=center, inner sep = 1.5pt}}
			\tikzset{bal/.style = {rectangle, draw=blue, line width=2pt, align=center, inner sep = 1.5pt}}
			\tikzset{arrow/.style = {->,> = {Stealth[scale=1.5]},black,line width=2pt}}
			
			\node[bal](LL) at (0,0) {
				$\begin{array}{cccc}
					0 & 0 \\
					0 & 0 \\
					0 & 1 \\
					0 & 1 \\ 	
				\end{array}$};
			\node[bal](RL) at (0.9,0) {
				$\begin{array}{cccc}
				  	0 & 1 \\
					1 & 0 \\
					0 & 0 \\
					0 & 0 \\ 	
				\end{array}$};
			\node[unb](LR) at (3,0) {
				$\begin{array}{cccc}
					1 & 0 \\
					0 & 1 \\
					0 & 1 \\
					0 & 0 \\ 	
				\end{array}$};
			\node[bal](LR) at (3.9,0) {
				$\begin{array}{cccc}
					0 & 0 \\
					0 & 0 \\
					0 & 0 \\
					0 & 1 \\ 	
				\end{array}$};
			\node at (0  ,1.2) {\scriptsize$L_{L_A}$};
			\node at (0.9,1.2) {\scriptsize$R_{L_A}$};
			\node at (3  ,1.2) {\scriptsize$L_{R_A}$};
			\node at (3.9,1.2) {\scriptsize$R_{R_A}$};
			
			\node[bal](bLL) at (0,-3) {
				$\begin{array}{cccc}
					0 & 0 \\
					0 & 0 \\
					0 & 1 \\
					0 & 1 \\ 	
				\end{array}$};
			\node[bal](bRL) at (0.9,-3) {
				$\begin{array}{cccc}
					0 & 1 \\
					1 & 0 \\
					0 & 0 \\
					0 & 0 \\ 	
				\end{array}$};
			\node[bal](bLR) at (3,-3) {
				$\begin{array}{cccc}
					{\color{teal}\bf 0} & 0 \\
					0 & 1 \\
					0 & 1 \\
					0 & 0 \\ 	
				\end{array}$};
			\node[bal](bLR) at (3.9,-3) {
				$\begin{array}{cccc}
					{\color{orange}\bf 1} & 0 \\
					0 & 0 \\
					0 & 0 \\
					0 & 1 \\ 	
				\end{array}$};
			\draw[arrow] (0.5,-1) to (0.5,-2);
			\draw[arrow] (3.5,-1) to (3.5,-2);
			\node at (-0.5,-1.5) {\footnotesize No swapping};
			\node[text width=2.5cm] at (5,-1.5) {\footnotesize Swapping of prefixes of length one};
		\end{tikzpicture}
	\end{center}
	
	\item [(c)] Finally, we divide each of the four subarrays into two, resulting in the $n=8$ columns. 
	The final output is as follows.
	\vspace{1mm}
	
	\begin{center}
		\begin{tikzpicture}
			
			\arraycolsep=2pt
			\tikzset{unb/.style = {rectangle, draw=red, line width=2pt, align=center, inner sep = 1.5pt}}
			\tikzset{bal/.style = {rectangle, draw=blue, line width=2pt, align=center, inner sep = 1.5pt}}
			\tikzset{arrow/.style = {->,> = {Stealth[scale=1.5]},black,line width=2pt}}
			
			\node[bal](c1) at (0,0) {
				$\begin{array}{cccc}
					{\color{teal}\bf 0} \\
					{\color{teal}\bf 0} \\
					{\color{teal}\bf 1} \\
					0 \\ 	
				\end{array}$};
			\node[bal](c2) at (0.6,0) {
				$\begin{array}{cccc}
					{\color{orange}\bf 0} \\
					{\color{orange}\bf 0} \\
					{\color{orange}\bf 0} \\
					1 \\ 	
				\end{array}$};
			\node[bal](c3) at (1.5,0) {
				$\begin{array}{cccc}
					0 \\
					1 \\
					0 \\
					0 \\ 	
				\end{array}$};
			\node[bal](c4) at (2.1,0) {
				$\begin{array}{cccc}
					1 \\
					0 \\
					0 \\
					0 \\ 	
				\end{array}$};
			
			\node[bal](c5) at (3.2,0) {
				$\begin{array}{cccc}
					{\color{teal}\bf 0}  \\
					{\color{teal}\bf 1}  \\
					0  \\
					0  \\ 	
				\end{array}$};
			\node[bal](c6) at (3.8,0) {
				$\begin{array}{cccc}
					{\color{orange}\bf 0}  \\
					{\color{orange}\bf 0}  \\
					1  \\
					0  \\ 	
				\end{array}$};

			\node[bal](c7) at (4.7,0) {
				$\begin{array}{cccc}
					1 \\
					0 \\
					0 \\
					0 \\ 	
				\end{array}$};
			\node[bal](c8) at (5.3,0) {
			$\begin{array}{cccc}
			 0 \\
			 0 \\
			 0 \\
			 1 \\ 	
			\end{array}$};
			
		\end{tikzpicture}
	\end{center}

	\end{itemize}
	
	\caption{Example for $n=8, p=1/4$. The current subarray $A$ is of size $8\times 4$. The subarrays, highlighted in {\color{red}red}, are not $p$-bounded while those, highlighted in {\color{blue}blue}, are $p$-bounded.
	}
	\label{fig1}
\end{figure}
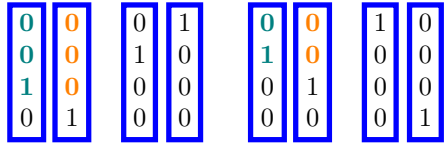


\vspace{0.05in}
\noindent{\bf Decoder, $\dec_{{\rm B}_{\rm RC}(n;p)}^{1}$}. 
\vspace{0.05in}

{\sc Input}: $A \in {\rm B}_{\rm RC}(n;p)$ of size $n\times n$\\
{\sc Output}: $\bx \triangleq \dec_{{\rm B}_{\rm RC}^1(n;p)}(A) \in \{0,1\}^N$, where $N=(n-c)(n-\lambda(n,p))$, $c=1/p(1+\log n)$\\[-2mm]

\begin{enumerate}[(I)]
\item Let $B$ be the subarray obtained by the last $c$ rows of $A$ and $C$ be the subarray obtained by the first $(n-c)$ rows of $A$
\item Decode ${\rm Re}(n)$ of length $n(1+\log n)$ from $B$ 
\item Do the reverse swapping process in $C$ according to ${\rm Re}(n)$, $C$ is an array of size $(n-c) \times n$. Let $\by_i$ be the $i$th row of $C$, we then have $\by_i \in {\bf S}(n,p)$ for all $1\le i\le n-c$
\item For $1\le i\le n-c$, let $\bz_i$ be the binary sequence of length $(n-\lambda(n,p))$ representing ${\rm rank}(\by_i)$
\item Output $\bx\triangleq \bz_1\bz_2\ldots \bz_{n-c} \in \{0,1\}^{(n-c)(n-\lambda(n,p))}$
\end{enumerate}
\vspace{0.05in}

\subsection{Design of ${\rm Bal}_{\rm RC}(n;\epsilon)$}

For $\epsilon$-balanced constraint, similar to what we achieved in Theorem~\ref{epsilon-balanced}, a swapping index of two sequences of size $m$ can be found in a set of constant size, which only depends on $\epsilon$. Consequently, the redundancy to encode each swapping index is reduced from $\log m$ bits to a constant number of bits. Recall the construction of the $\epsilon$-balancing set ${\rm S}_{\epsilon,n} \subset \{0,1,2,\ldots,n\}$, which is the set of indices of size at most $\floor{1/2\epsilon}+1$,
\begin{equation*}
{\rm S}_{\epsilon,n}= \{0,n\} \cup \{2\floor{\epsilon n}, 4\floor{\epsilon n}, 6\floor{\epsilon n}, \ldots \}.
\end{equation*}

\begin{lemma}[Modified Swapping Lemma]\label{swap-modify}
Given  $\bx=\by\bz \in \{0,1\}^{2m}$, $\bx$ is $\epsilon$-balanced, $\by=y_1y_2\ldots y_m, \bz=z_1z_2\ldots z_m$. There exists an index $t\in {\rm S}_{\epsilon,m}$, referred as a swapping $\epsilon$-balanced index of $\by$ and $\bz$, such that both ${\rm Swap}_{t}(\by,\bz)$ and ${\rm Swap}_{t}(\bz,\by)$ are $\epsilon$-balanced. 
\end{lemma}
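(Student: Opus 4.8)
The plan is to mirror the proof of the original Swapping Lemma (Lemma~\ref{swap-element}) and then graft onto it the sparse-index mechanism behind Theorem~\ref{epsilon-balanced}. First I would introduce the two weight profiles $w_1(t) \triangleq {\rm wt}({\rm Swap}_t(\by,\bz))$ and $w_2(t) \triangleq {\rm wt}({\rm Swap}_t(\bz,\by))$ and record the conservation identity
\[ w_1(t)+w_2(t) = {\rm wt}(\by)+{\rm wt}(\bz) = {\rm wt}(\bx), \]
valid for every $t$, since the two swapped words together use each bit of $\by$ and of $\bz$ exactly once. Writing $w_1(t) = {\rm wt}(\by)+\sum_{i=1}^{t}(z_i-y_i)$ exhibits the discrete-continuity property that drives every Knuth/swap argument: incrementing $t$ by one changes $w_1$ (hence also $w_2 = {\rm wt}(\bx)-w_1$) by at most one.

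Because $w_2$ is pinned to $w_1$ by the conservation identity, the two-sided requirement that \emph{both} ${\rm Swap}_t(\by,\bz)$ and ${\rm Swap}_t(\bz,\by)$ be $\epsilon$-balanced collapses to a single membership condition $w_1(t)\in J$, where $J$ is the intersection of $[(1/2-\epsilon)m,(1/2+\epsilon)m]$ with the reflected interval ${\rm wt}(\bx)-[(1/2-\epsilon)m,(1/2+\epsilon)m]$. A short computation using that $\bx$ is $\epsilon$-balanced (so $|{\rm wt}(\bx)-m|\le 2\epsilon m$) shows $J$ is a nonempty interval centered at ${\rm wt}(\bx)/2$. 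To see $J$ is reached, I would note that $w_1(0)={\rm wt}(\by)$ and $w_1(m)={\rm wt}(\bz)$ are symmetric about this center, hence lie on opposite sides of it; by the discrete intermediate value property above, $w_1(t)$ must cross into $J$ for some $t\in\{0,1,\ldots,m\}$. This already reproves the existence of a swapping $\epsilon$-balanced index, just as Corollary~\ref{coro1} does, but without the sparsity refinement.

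The substantive step, and the main obstacle, is upgrading ``some $t\in\{0,\ldots,m\}$'' to ``some $t\in{\rm S}_{\epsilon,m}$,'' so that the index costs only a constant number of bits. Here I would reuse the mechanism of Theorem~\ref{epsilon-balanced}: the consecutive elements of ${\rm S}_{\epsilon,m}=\{0,m\}\cup\{2\lfloor\epsilon m\rfloor,4\lfloor\epsilon m\rfloor,\ldots\}$ differ by $2\lfloor\epsilon m\rfloor$, and over any such block $w_1$ moves by at most $2\lfloor\epsilon m\rfloor$ by discrete continuity. Evaluating $w_1$ only at the sampled indices, I would track on which side of $J$ each sample lies; since the samples run from ${\rm wt}(\by)$ to ${\rm wt}(\bz)$ on opposite sides of the center of $J$, and the per-block displacement $2\lfloor\epsilon m\rfloor$ is comparable to the tolerance $\approx 2\epsilon m$ built into the $\epsilon$-balanced window, a sample cannot skip past $J$ without landing in it. The delicate part is making this last claim quantitative: I must compare the block displacement $2\lfloor\epsilon m\rfloor$ against the width of the two-sided window $J$ (which narrows as ${\rm wt}(\bx)$ departs from $m$) and handle the integrality of the endpoints $0$ and $m$ of ${\rm S}_{\epsilon,m}$, which is exactly where the design of the $\epsilon$-balancing set, with its endpoints deliberately included, does the work. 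Once this containment is established the statement follows, and the index is encoded merely by its position within ${\rm S}_{\epsilon,m}$, whose size is at most $\lfloor 1/2\epsilon\rfloor+1$.
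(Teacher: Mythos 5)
Your setup is sound and, up to the point where you stop, it is actually \emph{more} careful than the paper's own proof. The paper uses the same skeleton --- sample the weight profile on the grid ${\rm S}_{\epsilon,m}$, bound the per-block displacement by $2\lfloor\epsilon m\rfloor$, and invoke a discrete intermediate-value argument --- but it only treats the case ${\rm wt}(\by)<(1/2-\epsilon)m$, ${\rm wt}(\bz)>(1/2+\epsilon)m$, shows that each of the two profiles \emph{separately} crosses the target interval, and then simply asserts that both land in it at a \emph{common} sampled index. It never writes down the conservation identity or the intersection window $J$; that joint condition is exactly what your reduction makes explicit, and exactly what the paper leaves unjustified.

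The problem is that your proof stops at the step you yourself flag as delicate (``once this containment is established the statement follows''), and that step cannot be established: the claim that a sampled walk cannot skip past $J$ is false. The width of $J$ is $\min\{{\rm wt}(\bx)-(1-2\epsilon)m,\ (1+2\epsilon)m-{\rm wt}(\bx)\}$, which shrinks to zero as ${\rm wt}(\bx)$ approaches either end of its permitted range, while the displacement of $w_1$ between consecutive sampled indices can be as large as $2\lfloor\epsilon m\rfloor$; once $J$ is narrower than the step size, the walk can jump over it. Concretely, take $m=20$, $\epsilon=0.1$, so ${\rm S}_{\epsilon,m}=\{0,4,8,12,16,20\}$ and the target interval is $[8,12]$. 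Let $\by=0^{16}1110$ (weight $3$) and $\bz=1111\,0000\,1100\,1111\,1110$ (weight $13$). Then ${\rm wt}(\bx)=16=(1/2-\epsilon)(2m)$, so $\bx$ is $\epsilon$-balanced and $J=\{8\}$; the sampled values of $w_1$ are $3,7,7,9,13,13$ and of $w_2=16-w_1$ are $13,9,9,7,3,3$, so at no $t\in{\rm S}_{\epsilon,m}$ are both swapped words $\epsilon$-balanced (the index $t=10$ works, but it is not in ${\rm S}_{\epsilon,m}$). Note that this example satisfies ${\rm wt}(\by)<(1/2-\epsilon)m$ and ${\rm wt}(\bz)>(1/2+\epsilon)m$, i.e.\ it lies squarely in the case the paper's proof claims to handle, so the paper's argument fails on it too: the lemma in this form is only recoverable by strengthening the hypothesis (e.g.\ requiring $\bx$ to be exactly balanced, or $(\epsilon/2)$-balanced together with a grid of spacing $\lfloor\epsilon m\rfloor$) or by weakening the conclusion (the sampled index only guarantees one output $\epsilon$-balanced and the other $3\epsilon$-balanced). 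So your diagnosis of where the real work lies is correct, but neither your argument nor the paper's closes that gap.
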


\begin{proof}
Assume that ${\rm wt}(\by)<(1/2-\epsilon)m$ while ${\rm wt}(\bz)>(1/2+\epsilon)m$. Let $t_k=2k\floor{\epsilon m},$ for $k\ge 1, t_k \in {\rm S}_{\epsilon,m}$. Observe that, for any $i>0$, the weights of ${\rm Swap}_{t_i}(\by,\bz)$ and  ${\rm Swap}_{t_{i+1}}(\by,\bz)$ differ at at most $2\floor{\epsilon m}$. The same argument applies for ${\rm Swap}_{t_i}(\bz,\by)$ and  ${\rm Swap}_{t_{i+1}}(\bz,\by)$. 
 
In addition, when $t=m$, we have ${\rm wt}\big({\rm Swap}_{m}(\by,\bz)\big)>(1/2+\epsilon)m$ while ${\rm Swap}_{m}(\bz,\by)\big)<(1/2-\epsilon)m$. There must be an index $t\in {\rm S}_{\epsilon,m}$ such that ${\rm wt}\big({\rm Swap}_{t}(\by,\bz)\big), {\rm wt}\big({\rm Swap}_{t}(\bz,\by)\big) \in \big[(1/2-\epsilon)m,(1/2+\epsilon)m\big]$.
\end{proof}

Note that the swapping $\epsilon$-balancing index might not be unique. Besides, the set ${\rm S}_{\epsilon,m}$ might not include all the possible swapping $\epsilon$-balanced indices. For simplicity, during the encoding process, we simply use the smallest possible index in ${\rm S}_{\epsilon,m}$. To represent such an index, a binary sequence of size $\log\big|{\rm S}_{\epsilon,m} \big|=\ceil{\log \big(\floor{1/2\epsilon}+1\big)}$ is needed.
\begin{example}
Consider $m=10, \epsilon=0.1$, $\by=1110000000$ and $\bz=1101101111$. We then have ${\rm S}_{\epsilon,m}=\{2,4,6,8,10\}$. We can verify that $t=6,8 \in {\rm S}_{\epsilon,m}$ are both swapping $\epsilon$-balanced index. In addition $t=5 \notin {\rm S}_{\epsilon,m}$ is also a swapping $\epsilon$-balanced index.
\end{example}

\noindent {\bf Encoder} $\enc_{{\rm Bal}_{\rm RC}(n;\epsilon)}^{1}$. For simplicity, we assume that $\log n$ is integer. Set $c=2\ceil{\log \big(\floor{1/2\epsilon}+1\big)}$. The binary data $\bx$ is of length $N= (n-2c)(n-c)=n^2-3cn+2c^2$, i.e. the redundancy is then $3cn-2c^2=\Theta(n)$ bits. Similar to the construction of the encoder $\enc_{{\rm B}_{\rm RC}(n;p)}^{1}$ in Subsection III-A, the encoding algorithm for $\enc_{{\rm Bal}_{\rm RC}(n;\epsilon)}^{1}$ includes three phases.

\begin{itemize}
\item In phase I. All rows in the array are encoded to be $\epsilon$-balanced via the encoding method in \cite{TT:DNA,alon:1988}, that uses at most $c=\Theta(1)$ redundant bits for each row. Particularly, the data is first divided into $(n-2c)$ sequences of equal length $n-c$ bits. Suppose $\bx=\bx_1\bx_2\ldots \bx_{n-2c}$ where $\bx_i\in \{0,1\}^{n-c}$. For $1\le i\le n-2c$, the encoder uses the method in \cite{TT:DNA,alon:1988} to encode $\bx_i$ to obtain the row $A_i$ of length $n$ and $A_i$ is $\epsilon$-balanced. 

\item  In phase II, the array is then divided into two subarrays of equal size, and the problem is reduced to obtain two $\epsilon$-balanced subarrays from a given $\epsilon$-balanced array. Here, an array containing $m$ bits is defined to be $\epsilon$-balanced if the number of 1's is within $\big[(1/2-\epsilon)m,(1/2+\epsilon)m\big]$ (which is similar to Definition 1). The process is repeated until each subarray is as a single column. For each subproblem, to obtain two $\epsilon$-balanced subarrays, we follow Lemma~\ref{swap-modify} to find a swapping $\epsilon$-balanced index. To represent such an index, we need at most $\ceil{\log \big(\floor{1/2\epsilon}+1\big)}=c/2$ redundant bits. Let ${\rm Re}(n)$ be the sequence obtained by concatenating all binary representations of all swapping indices. The size of ${\rm Re}(n)$, is at most 
\begin{small}
\begin{align*}
\sum_{{\substack{k=2^j\\
                            2\le k\le n}}}
 (n/k) c/2&<cn/2 \text{  (bits)}.
\end{align*}
\end{small}

\item In Phase III, the encoder encodes ${\rm Re}(n)$ into an array $B$ of size $2c\times n$ such that its every row and every column is $\epsilon$-balanced. At the end of Phase III, the encoder outputs the concatenation of $A$ and $B$, which is an array of size $n \times n$. Similar to the construction of the encoder $\enc_{{\rm B}_{\rm RC}(n;p)}^{1}$ in Subsection III-A, we also show that ${\rm Re}(n)$ can be encoded/decoded efficiently without repeating the swapping procedure. Suppose that ${\rm Re}(n)=r_1r_2\ldots r_{cn/2}$, ${\rm Re}(n)$ is encoded to $B$ as follows. We first fill the information bits into all the odd rows of B as $B_1 = r_1 \overline{r_1} r_2 \overline{r_2} \ldots r_{n/2} \overline{r_{n/2}}, \ldots, B_{cn} = r_{(c-1)n/2+1} \overline{r_{(c-1)n/2}}\ldots r_{cn/2} \overline{r_{cn/2}}$. For the even row, we simply set $B_{2i}=\overline{B_{2i-1}}$ for $1\le i\le c$. It is easy to verify that every row and every column of $B$ is balanced and ${\rm Re}(n)$ can be decoded uniquely from $B$. 
\end{itemize}

Recall that if the encoder follows Corollary~\ref{coro1} to find a swapping $\epsilon$-balancing index for each step, the redundancy is at least $n\log n+O(n+\log n)$ bits. 
Similar to the construction of the decoder $\dec_{{\rm B}_{\rm RC}(n;p)}^{1}$, we have the construction of $\dec_{{\rm Bal}_{\rm RC}(n;\epsilon)}^{1}$.

\vspace{0.05in}
\noindent{\bf Decoder, $\dec_{{\rm Bal}_{\rm RC}(n;\epsilon)}^{1}$}. 
\vspace{0.05in}

{\sc Input}: $A \in {\rm Bal}_{\rm RC}(n;\epsilon)$ of size $n\times n$\\
{\sc Output}: $\bx \triangleq \dec_{{\rm Bal}_{\rm RC}(n;\epsilon)}^{1}(A) \in \{0,1\}^N$, where $N=(n-2c)(n-c)$, and $c=2\ceil{\log \big(\floor{1/2\epsilon}+1\big)}$\\[-2mm]

\begin{enumerate}[(I)]
\item Let $B$ be the subarray obtained by the last $2c$ rows of $A$ and $C$ be the subarray obtained by the first $(n-2c)$ rows of $A$
\item Decode ${\rm Re}(n)$ of length $cn/2$ from $B$ 
\item Do the reverse swapping process in $C$ according to ${\rm Re}(n)$, $C$ is an array of size $(n-2c) \times n$. Let $\by_i$ be the $i$th row of $C$, we then have $\by_i$ is $\epsilon$-balanced for all $1\le i\le n-2c$, $\by_i$ is of length $n$
\item For $1\le i\le n-2c$, let $\bz_i$ be the binary sequence of length $(n-c)$ after removing $c$ redundant bit from $\by_i$ (refer to $\epsilon$-balanced encoder/decoder for binary sequence in \cite{TT:DNA})
\item Output $\bx\triangleq \bz_1\bz_2\ldots \bz_{n-2c} \in \{0,1\}^{(n-2c)(n-c)}$
\end{enumerate}
\vspace{0.05in}

\section{Efficient Encoders/Decoders for ${\rm B}_{\rm RC}(n;p)$ and ${\rm Bal}_{\rm RC}(n;\epsilon)$ via The Sequence Replacement Technique}

The Sequence Replacement Technique (SRT) has been widely applied in the literature (for example, see \cite{TT:DNA,TT:2020,srt:2010,srt:2019}). This is an efficient method for removing forbidden substrings from a source word. The advantage of this technique is that the complexity of encoder and decoder is very low, and they are also suitable for parallel implementation. In general, the encoder removes the forbidden strings and subsequently inserts its representation (which also includes the position of the substring) at predefined positions in the sequence. In our recent work \cite{TT:2020}, for codewords of length $m$, we enforced the almost-balanced weight-constraint over every {\em window} of size $\ell=\Omega(\log m)$ (here, a window of size $\ell$ of $\bx$ refer $\ell$ consecutive bits in $\bx$). 
\begin{theorem}[Nguyen \et{} \cite{TT:2020}]\label{theoremSRT}
Given $p_1,p_2$ where $0\le p_1<1/2<p_2\le 1$, let $c=\min\{1/2-p_1,p_2-1/2\}$. For $(1/c^2) \ln m \le \ell \le m$, there exists linear-time algorithm $\enc :\{0,1\}^{m-1} \to\{0,1\}^m$ such that for all $\bx\in \{0,1\}^{m-1}$ if $\by=\enc(\bx)$ then ${\rm wt}(\by) \in [p_1m,p_2m])$ and for every window $\bw$ of size $\ell$ of $\by$, ${\rm wt}(\bw) \in [p_1\ell,p_2\ell]$.
\end{theorem}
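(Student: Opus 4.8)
The plan is to realize $\enc$ through the sequence replacement technique, treating every length-$\ell$ window whose weight falls outside $[p_1\ell,p_2\ell]$ as a forbidden pattern to be eliminated. The engine that makes this possible is a counting estimate: a forbidden window has weight either below $p_1\ell$ or above $p_2\ell$, and since $c=\min\{1/2-p_1,p_2-1/2\}$ measures the gap from $\ell/2$, the Chernoff--Hoeffding bound for a $\mathrm{Binomial}(\ell,1/2)$ tail gives
\begin{align*}
\#\big\{\bw\in\{0,1\}^\ell : {\rm wt}(\bw)\notin[p_1\ell,p_2\ell]\big\} \le 2\cdot 2^{\ell}\exp(-2c^2\ell).
\end{align*}
The hypothesis $\ell\ge (1/c^2)\ln m$ forces $\exp(-2c^2\ell)\le m^{-2}$, so the number of forbidden window-contents is at most $2^{\ell+1}/m^2$, and hence each such content can be indexed by at most $\ell-2\ceil{\log m}+O(1)$ bits. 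This is the crucial slack: a forbidden window can be described by strictly fewer bits than the $\ell$ it occupies, leaving headroom of order $\log m$ for bookkeeping.

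First I would fix a self-delimiting descriptor for a single replacement: a constant-length marker, the position of the forbidden window in $\ceil{\log m}$ bits, and the index of its content in $\ell-2\ceil{\log m}+O(1)$ bits, for a total of at most $\ell-\ceil{\log m}+O(1)$ bits, which is strictly below $\ell$ for large $m$. The encoder then pads $\bx$ to length $m$ with a single flag bit (the one permitted redundant bit), scans from left to right, locates the leftmost forbidden window, deletes its $\ell$ bits, and prepends the descriptor. Since the descriptor is shorter than the deleted window, the working string strictly shortens at every step; this both guarantees termination after at most $O(m/\log m)$ iterations and accumulates the surplus length needed to restore the output to length exactly $m$. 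Iterating until no forbidden window remains yields a string in which every length-$\ell$ window satisfies ${\rm wt}(\bw)\in[p_1\ell,p_2\ell]$.

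For the decoder I would simply reverse the process: read the flag bit, and while it signals an active descriptor, parse the leading marker, recover the stored position and content, reinsert the reconstructed window, and repeat; the strict length increase at each reinsertion mirrors the encoder and guarantees the reverse iteration halts. The global bound ${\rm wt}(\by)\in[p_1 m,p_2 m]$ then follows from the window bound by a routine partition argument: tiling $\{1,\dots,m\}$ by disjoint length-$\ell$ windows (using one overlapping terminal window when $\ell\nmid m$) and summing the per-window estimates, with a short inclusion--exclusion correction on the remainder.

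The hard part will be the self-reference inherent to any such scheme: the prepended descriptors occupy a prefix region that is itself subject to the window constraint, and naive protection---for instance doubling each descriptor bit $r_i$ into $r_i\overline{r_i}$ as in the RC construction---would double the descriptor length and destroy the compression, since $\ell$ may exceed $2\ceil{\log m}$. The delicate accounting is therefore to keep the descriptors simultaneously (i) self-delimiting and uniquely parseable, (ii) individually compliant with the window constraint in the prefix region, and (iii) short enough that the per-replacement saving of roughly $\ceil{\log m}$ bits still dominates the descriptor overhead, so that the whole scheme nests into exactly one redundant bit. Making these three demands compatible---most plausibly by confining descriptors to a protected header whose own almost-balancedness is enforced using the slack freed at each step---is where the main effort lies.
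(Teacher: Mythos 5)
Your skeleton is the same as the construction behind this theorem (which the paper does not reprove, but replays in its 2D form in Section IV-B): delete each forbidden length-$\ell$ window and trade it for a prefixed descriptor consisting of a constant marker, a position field of $\ceil{\log m}$ bits, and a compressed index of the window's content, where the compression is exactly the Chernoff--Hoeffding count you wrote down (this is the content of the cited map $\Psi$ in Theorem~\ref{thm2}); termination by strict shortening and LIFO undoing by the decoder are also as in the paper. Up to that point your accounting is correct.

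The genuine gap is that you stop precisely where the proof has to do its work, and the fix you gesture at is not the one that works. First, the self-reference issue you flag is not resolved by any ``protected header whose own almost-balancedness is enforced'': the actual scheme imposes nothing on descriptors a priori. The replacement loop simply re-scans the \emph{entire} current string, descriptors included; if a prepended descriptor happens to create a forbidden window, that window is itself deleted and replaced at a later iteration like any other. None of your compatibility demands (i)--(iii) ever has to be met --- demand (ii) is just dropped --- and the only thing left to verify is termination, which the strict length decrease already gives (the paper additionally exits the loop once the string has shrunk to a fixed fraction of its length and treats that corner case separately in the extension phase). Second, you never construct the padding that returns the string to length exactly $m$: after the replacements the string is strictly shorter, and the windows lying inside or straddling the appended part must also be compliant. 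The paper's device is to append periodic repetitions of the final length-$\ell$ window $\bw$ of the current string; every length-$\ell$ window of the padding is then a cyclic shift of $\bw$, hence has the same weight as $\bw$ and is automatically compliant, with a complementation trick covering the degenerate case where the string became too short. Relatedly, your closing ``routine tiling'' for ${\rm wt}(\by)\in[p_1m,p_2m]$ is not routine when $\ell\nmid m$: disjoint tiling plus one overlapping terminal window overshoots $p_2 m$ by up to $(1-p_2)r$ on the remainder $r$, which is exactly why the paper's Lemma~\ref{epsilon2} runs the window constraint at the tighter tolerance $\epsilon'=\epsilon/2$ and only concludes $\epsilon$-balance globally. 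Without these mechanisms your proposal is an outline of the right technique, not a proof.
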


In this section, we show that, for sufficiently large $n$, the redundancy to encode (decode) binary data to (from) ${\rm B}_{\rm RC}(n;p)$ when $p>1/2$ and ${\rm Bal}_{\rm RC}(n;\epsilon)$ for arbitrary $\epsilon\in(0,1/2)$ can be further reduced from $\Theta(n)$ bits to only a single bit via the SRT. Particularly, we provide two efficient encoders:
\begin{itemize}
\item The first encoder adapts the SRT (presented in \cite{TT:2020}) with the antipodal matching (constructed in \cite{ROTH:ISIT}) to encode arbitrary data to ${\rm B}_{\rm RC}(n;p)$ when $p>1/2$ with at most $n+3$ redundant bits. 
\item The second encoder, which is the main contribution of this work, modifies the SRT as presented in \cite{TT:2020} to encode ${\rm Bal}_{\rm RC}(n;\epsilon)$ with only one redundant bit. Since ${\rm Bal}_{\rm RC}(n;\epsilon) \subset {\rm B}_{\rm RC}(n;p)$ for all $\epsilon\le p-1/2$, this can be apply to encode ${\rm B}_{\rm RC}(n;p)$ when $p>1/2$ with one redundant bit as well. 
\end{itemize}

\subsection{SRT and Antipodal Matching for ${\rm B}_{\rm RC}(n;p)$ when $p>1/2$}

Recall that the encoders proposed in \cite{ROTH:ISIT} can be used for constructing ${\rm B}_{\rm RC}(n;p)$ when $p>1/2$, and the redundancy is roughly $2n$ (bits). In this section, we provide a linear-time encoder for ${\rm B}_{\rm RC}(n;p)$ where $p>1/2$ with at most $(n+3)$ redundant bits. We have $p>1/2$ in all our descriptions in this part.

Recall that for an array $A$ of size $n\times n$, we use $A_i$ to denote the $i$th row of $A$ and $A^j$ to denote the $j$th column of $A$. In addition, we use $A_{i;\langle j \rangle}$ to denote the sequence obtained by taking the first $j$ entries of the row $A_i$ and use $A^{i;\langle j \rangle}$ to denote the sequence obtained by taking the first $j$ entries of the column $A^i$. For example, if 
 \[A= \left( \begin{array}{ccc}
a & b & c\\
d & e & f\\
g & h & i\end{array} \right),\text{ then } A_{1;\langle 2 \rangle}=ab, A^{3;\langle 2 \rangle}=cf.\]

We now describe the detailed construction of the Encoder $\enc_{{\rm B}_{\rm RC}(n;p)}^{2}$, where $p>1/2$.
\vspace{0.05in}

\noindent {\bf Encoder} $\enc_{{\rm B}_{\rm RC}(n;p)}^{2}$, where $p>1/2$. Set $N=n^2-(n+3)$, $\ell=n$, $p_1=0$, $p_2=p$ and $c=p-1/2$. According to Theorem~\ref{theoremSRT}, for sufficient $n$ that $(1/c^2) \ln (n^2-n-3) \le n \le n^2-n-3$, there exists linear-time encoder, $\enc_{\rm seq}:\{0,1\}^N \to \{0,1\}^{N+1}$ such that for all $\bx \in \{0,1\}^N$ and $\by=\enc_{\rm seq}(\bx)$ we have ${\rm wt}(\bw) \in [0,pn]$ for every window $\bw$ of size $n$ of $\by$. In addition, we follow \cite{ROTH:ISIT} to construct the antipodal matchings $\phi$ for sequences of length $n-1$. 
\vspace{0.05in}


{\sc Input}: $\bx \in \{0,1\}^N$\\
{\sc Output}: $A \triangleq \enc_{{\rm B}_{\rm RC}(n;p)}^{2}(\bx) \in {\rm B}_{\rm RC}(n;p)$ with $p>1/2$\\[-2mm]

\begin{enumerate}[(I)]
\item Set $\by=\enc_{\rm seq}(\bx) \in \{0,1\}^{N+1}$. Suppose that $\by=y_1y_2\ldots y_{n^2-n-2}$. 
\item Fill $n^2-n-1$ bits of $\by$ to $A$ row by row as follows. 
\begin{itemize} 
\item Set $A_i \triangleq y_{n(i-1)+1}\ldots y_{ni}$ for $1\le i\le n-2$.
\item Set $A_{n-1}\triangleq y_{n(n-2)+1}\ldots y_{n^2-n-2}*_1*_2$, where $*_1,*_2$ are determined later.
\item Suppose that $A_{n}=z_1z_2\ldots z_n$ where $z_i$ is determined later. 
\item If ${\rm wt}\Big(A_{n-1;\langle n-2 \rangle}\Big)>p(n-2)$, flip all bits in $A_{n-1;\langle n-2 \rangle}$ and set $*_1=1$, otherwise set $*_1=0$.
\end{itemize} 

\item For $1\le i\le(n-1)$, we check the $i$th column:
\begin{itemize}
\item if ${\rm wt}\Big(A^{i;\langle n-1 \rangle}\Big)>pn$, set $z_i=1$ and replace $A^{i;\langle n-1 \rangle}$ with $\phi\Big(A^{i;\langle n-1 \rangle}\Big)$
\item Otherwise, set $z_i=0$. 
\end{itemize}

\item Check the $n$th row: 
\begin{itemize}
\item If ${\rm wt}\Big(A_{n;\langle n-1 \rangle} \Big)>pn$, set $*_2=1$ and replace $A_{n;\langle n-1 \rangle}$ with $\phi\Big( A_{n;\langle n-1 \rangle} \Big)$
\item Otherwise, set $*_2=0$. 
\end{itemize}
 
\item Check the $n$th column: 
\begin{itemize}
\item If ${\rm wt}\Big(A^{n;\langle n-1 \rangle}\Big)>pn$, set $z_n=1$ and replace $A^{n;\langle n-1 \rangle}$ with $\phi\Big(A^{n,\langle n-1 \rangle}\Big)$.
\item Otherwise, set $z_n=0$. 
\end{itemize}

\item Output $A$.
\end{enumerate}

\begin{theorem}\label{constrained proof}
The Encoder $\enc_{{\rm B}_{\rm RC}(n;p)}^{2}$ is correct. In other words, $\enc_{{\rm B}_{\rm RC}(n;p)}^{2}(\bx)\in {\rm B}_{\rm RC}(n;p)$ with $p>1/2$ for all $\bx \in \{0,1\}^N$. The redundancy is $n+3$ (bits).
\end{theorem}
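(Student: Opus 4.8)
The plan is to prove two things separately: (i) that the output $A$ lies in ${\rm B}_{\rm RC}(n;p)$, i.e.\ every row and every column has weight at most $pn$, and (ii) that the redundancy equals $n+3$; decodability then follows because every operation the encoder performs is individually reversible given the bookkeeping bits. The single structural fact driving the whole argument is the behaviour of the antipodal matching $\phi$ on \emph{heavy} vectors. Since $p>1/2$ we have $pn>(n-1)/2$, so whenever a length-$(n-1)$ prefix has weight exceeding $pn$ it also exceeds $(n-1)/2$; by Property~2 of Definition~\ref{def}, $\phi$ then returns a vector whose $1$'s lie in a \emph{subset} of the original $1$-positions, and by Property~1 its weight drops to $(n-1)-{\rm wt}(\cdot)<(1-p)n<pn$. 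Hence every $\phi$-replacement the encoder performs (a) renders the processed line $p$-bounded and (b) \emph{only turns $1$'s into $0$'s}, so it can never increase the weight of any crossing row or column. I would isolate this monotonicity as a lemma: once a line has been certified $p$-bounded, no later step can violate its constraint.

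With that lemma I would walk through the five classes of lines in the order they are finalized. Rows $1,\dots,n-2$ are windows of size $n$ of $\by=\enc_{\rm seq}(\bx)$, so by the window property of Theorem~\ref{theoremSRT} each has weight at most $pn$, and the only later modifications they receive come from the column $\phi$-replacements of Steps~(III) and~(V), which by monotonicity only lower their weight. Columns $1,\dots,n-1$ are certified in Step~(III) (heavy $\Rightarrow$ apply $\phi$ and set $z_i=1$, giving final weight $\le(1-p)n+1<pn$; light $\Rightarrow z_i=0$ and weight $\le pn$), and Steps~(IV),(V) only remove $1$'s from them; column $n$ is certified identically in Step~(V), with nothing following it. For row $n-1$ I would use the Step~(II) flip: the bit $*_1$ guarantees ${\rm wt}(A_{n-1;\langle n-2\rangle})\le p(n-2)$, so with the two bookkeeping bits $*_1,*_2\in\{0,1\}$ the row weight is at most $p(n-2)+2$; distinguishing the unflipped case (weight $\le pn-2p+1\le pn$ for $p>1/2$) from the flipped case (weight $<(1-p)(n-2)+2=(1-p)n+2p$, which is $\le pn$ once $n\ge 2p/(2p-1)$) shows the bound holds for all sufficiently large $n$.

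The delicate line — and the step I expect to be the main obstacle — is the last row $A_n=z_1\cdots z_n$, whose first $n-1$ entries are the column indicators set in Step~(III). When Step~(IV) fires (these entries are heavy), $\phi$ drops their weight below $(n-1)-pn$, so even after Step~(V) appends $z_n$ the row weight is at most $(n-1)-pn+1=(1-p)n<pn$, which is fine for every $p>1/2$. The genuinely subtle case is when Step~(IV) does \emph{not} fire: then ${\rm wt}(z_1\cdots z_{n-1})\le pn$, and appending the bit $z_n$ (fixed in Step~(V) from column $n$, independently of the first $n-1$ indicators) could a priori push the weight to $pn+1$. Resolving this is the crux of the correctness proof: I would show, using integrality of the weights, that the comparison threshold in Step~(IV) must reserve exactly one unit of slack for $z_n$ (equivalently, that it should fire whenever the prefix weight reaches $pn$ rather than strictly exceeds it); once this boundary accounting is pinned down, the monotonicity lemma guarantees $A_n$ remains $p$-bounded through the end of the encoding.

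Finally, the redundancy is a bookkeeping count: $\enc_{\rm seq}$ contributes one bit ($N\to N+1$), and of the $n^2$ cells exactly $n^2-(N+1)=n+2$ are not filled from $\by$, namely the $n$ cells of $A_n$ together with $*_1,*_2$; hence the redundancy is $1+(n+2)=n+3$. For decodability I would observe that $\phi$ is an involution (Property~3 of Definition~\ref{def}), the Step~(II) flip is self-inverse, and each indicator bit ($z_i,*_1,*_2$) records precisely whether its associated operation was applied. The decoder therefore undoes Steps~(V),(IV),(III),(II) in reverse, reconstructs $\by$, and applies $\dec_{\rm seq}$ to recover $\bx$, which establishes that $\enc_{{\rm B}_{\rm RC}(n;p)}^{2}$ is indeed an encoder for ${\rm B}_{\rm RC}(n;p)$ with the claimed redundancy.
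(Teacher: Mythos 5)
Your proposal follows essentially the same route as the paper's proof: the same monotonicity observation (Properties 1 and 2 of Definition~\ref{def} imply that a $\phi$-replacement of a heavy line only turns $1$'s into $0$'s, so it can never push a crossing row or column over the threshold), the same use of the window property of $\enc_{\rm seq}$ for rows $1,\dots,n-2$, the same two-case arithmetic for row $n-1$ (including the implicit lower bound $n>2/(2p-1)$, which the paper states), and the same redundancy count $n^2-N=n+3$.

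The one place you diverge is row $n$, and there you are \emph{more} careful than the paper itself. The paper disposes of this row with the single sentence ``From Step (IV), we observe that the $n$th row satisfies the weight-constraint,'' which overlooks that $z_n$ is only set in Step (V), after the Step (IV) check on $z_1\cdots z_{n-1}$. As you observe, if Step (IV) does not fire (prefix weight $\le pn$, hence $\le\lfloor pn\rfloor$) and column $n$ turns out heavy (so $z_n=1$), the final row weight can reach $\lfloor pn\rfloor+1>pn$; nothing in the encoder rules out having exactly $\lfloor pn\rfloor$ heavy columns among the first $n-1$ together with a heavy $n$th column, since the required total weight is roughly $p^2n^2$, well within the budget of roughly $pn^2$ available to the first $n-1$ rows for every $p\in(1/2,1)$. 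So you have identified a genuine off-by-one that the paper's proof silently skips, and your repair --- reserving one unit of slack in the Step (IV) threshold --- is the right one. Two caveats on the patch: (i) firing at prefix weight $\ge pn$ closes the gap only when $pn$ is an integer; in general the test must be ``fire when the prefix weight exceeds $pn-1$,'' since otherwise a prefix of weight $\lfloor pn\rfloor<pn$ still overflows after $z_n=1$; (ii) one must check that the lowered threshold keeps Property 2 of Definition~\ref{def} applicable, i.e.\ that the fired weight exceeds $(n-1)/2$ --- true because $\lfloor pn\rfloor>(n-1)/2$ when $p>1/2$ --- and that the post-replacement weight, at most $n-\lfloor pn\rfloor\le(1-p)n+1$, is at most $pn$, which again only needs $n\ge 1/(2p-1)$ and is subsumed by the theorem's standing assumption on $n$. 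One further small point in your favor: your decoder sketch undoes the steps in exact reverse order (V, IV, III, II), so $*_1$ is read only after the column replacements are inverted; this matters because $*_1$ sits in column $n-1$, which may itself have been replaced in Step (III), and the paper's stated decoder reads $*_1$ before undoing the column replacements.
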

\begin{proof}
Let $A=\enc_{{\rm B}_{\rm RC}(n;p)}^{2}(\bx)$. We first show that the weight of every column of $A$ is at most $pn$. From Step (III) and Step (V), the encoder guarantees that the weights of $n$ columns are at most $pn$. Although there is a replacement in the $n$th row in Step (IV), it does not affect the weight of any column. Indeed, from the definition of an antipodal matching, if ${\rm wt}(\bx)>n/2$, then $\phi(\bx)$ has all its 1's in positions where $\bx$ has 1's and ${\rm wt}(\phi(\bx))\le n/2$. Therefore, whenever the encoder performs replacement step in any row (or respectively any column), it does not increase the weight of any column (or respectively any row). Therefore, we conclude that ${\rm wt}\Big(A^{i}\Big)\le pn$ for all $1\le i\le n$.

We now show that the ${\rm wt}\Big(A_{i}\Big)\le pn$ for all $1\le i\le n$. From Step (IV), we observe that the $n$th row satisfies the weight-constraint. As mentioned above, during Step (III) and Step (V), whenever the encoder performs replacement step in any column, it does not increase the weight of any row, i.e. the weight-constraint is preserved over the first $(n-2)$ rows that is guaranteed by the Encoder $\enc_{\rm seq}$ from Step (I). It remains to show that the $(n-1)$th row satisfies the weight constraint with the determined values of $*_1,*_2$. Indeed, from Step (II), if $*_1=0$, we have: 
$${\rm wt}\Big(A_{n-1}\Big)\le p(n-2)+1 = pn + (1-2p) \le pn.$$
Otherwise, 
$${\rm wt}\Big(A_{n-1}\Big)< (1-p)(n-2)+2 < pn \text{ for all } n>2/(2p-1).$$
In conclusion, we have $\enc_{{\rm B}_{\rm RC}(n;p)}^{2}(\bx)\in {\rm B}_{\rm RC}(n;p)$ for all $\bx \in \{0,1\}^m$. Since $N=n^2-(n+3)$, the redundancy of our encoder is then $n^2-N=n+3$ (bits).
\end{proof}

\begin{remark}
We now discuss the lower bound for $n$ so that the encoding algorithm works. It requires $(1/c^2) \log_e (n^2-n-3) \le n \le n^2-n-3$ where $c=p-1/2$. Since $n^2-n-3\ge n$ for all $n\ge 3$ and $\ln (n^2-n-3)\le \ln n^2=2\ln n$, a simple lower bound can be described as $n\ge 3$ and $n/\ln n \geq 1/(p-1/2)^2$. 
\end{remark} 

For completeness, we describe the corresponding decoder $\dec_{{\rm B}_{\rm RC}(n;p)}^{2}$ as follows. 

\vspace{0.05in}
\noindent{\bf Decoder, $\dec_{{\rm B}_{\rm RC}(n;p)}^{2}$}. 
\vspace{0.05in}

{\sc Input}: $A \in {\rm B}_{\rm RC}(n;p)$\\
{\sc Output}: $\bx \triangleq \dec_{{\rm B}_{\rm RC}(n;p)}^{2}(A) \in \{0,1\}^N$, where $N=n^2-n-3$\\[-2mm]

\begin{enumerate}[(I)]
\item Decode the $n$th column, $A^{n}$. If the last bit is 1, flip it to 0 and replace $A^{n;\langle n-1 \rangle}$ with $\phi\Big(A^{n;\langle n-1 \rangle}\Big)$. Otherwise, proceed to the next step. 
\item Decode the $n$th row, $A_{n}$. Check the last bit in $A_{n-1}$, if it is 1, flip it to 0 and replace $A_{n;\langle n-1 \rangle}$ with $\phi\Big(A_{n;\langle n-1 \rangle}\Big)$. Otherwise, proceed to the next step. 
\item Decode the $(n-1)$th row, $A_{n-1}$. Check the second last bit in $A_{n-1}$, if it is 1, flip all the bits in $A_{n-1}$. Otherwise, proceed to the next step. 

Suppose the $n$th row is now $A_{n}=z_1z_2\ldots z_n$.
\item For $1\le i \le (n-1)$, we decode the $i$th column, i.e. $A^{i}$, as follows. If $z_i=1$, replace $A^{i;\langle n-1 \rangle}$ with $\phi\Big(A^{i,\langle n-1 \rangle}\Big)$.

\item Set $\by\triangleq A_{1}  A_{2} \ldots  A_{n-2}  A_{n-1;\langle n-2 \rangle} \in \{0,1\}^{n^2-n-2}$.

\item Output $\bx\triangleq \dec_{\rm seq}(\by) \in \{0,1\}^{n^2-n-3}$.
\end{enumerate}
\vspace{0.05in}

\noindent{\bf Complexity analysis.} For $n\times n$ arrays, it is easy to verify that encoder $\enc_{{\rm B}_{\rm RC}(n;p)}^{2}$ and decoder $\dec_{{\rm B}_{\rm RC}(n;p)}^{2}$ have linear-time complexity. Particularly, there are at most $n+2$ replacements, each replacement is done over sequence of length $n$, and the complexity of encoder/decoder $\enc_{\rm seq}, \dec_{\rm seq}$ is linear over codeword length $N=n^2-n-3=\Theta(n^2)$. We conclude that the running time of encoder $\enc_{{\rm B}_{\rm RC}(n;p)}^{2}$ and decoder $\dec_{{\rm B}_{\rm RC}(n;p)}^{2}$ is $\Theta(n^2)$ which is linear in the message length $N=n^2-n-3$.

\begin{remark}
In \cite{roth:1999}, Talyansky \et{} studied the {\em t-conservative arrays constraint}, where every row has at least $t$ transitions $0 \to 1$ or $1 \to 0$ for some $t\le n/(2\log n)-O(\log n)<n/2$. Such a constraint is equivalent to the $p$-bounded constraint in a weaker condition, where the weight constraint is enforced in every row only. Indeed, for a sequence $\bx=x_1x_2\ldots x_n\in \{0,1\}^n$, consider the {\em differential of $\bx$}, denoted by ${\rm Diff}(\bx)$, which is a sequence $\by=y_1y_2\ldots y_n\in\{0,1\}^n$, where $y_1=x_1$ and $y_i=x_i-x_{i-1} \ppmod{2}$ for $2\le i\le n$. We then observe that $\bx$ has at least $t$ transitions if and only if the weight of ${\rm Diff}(\bx)$ is at least $t$. In addition, the constraint problem where the weight in every row is at least $t$ where $t<n/2$ is equivalent to the constraint problem where the weight in every row is at most $t$ where $t>n/2$. Therefore, one may modify the construction of our proposed encoder $\enc_{{\rm B}_{\rm RC}(n;p)}^{2}$ (i.e for $p=1-1/(2\log n))>1/2$) to construct binary arrays such that there are at least $t$ transitions in every row and every column with at most $n+3$ redundant bits. When the weight-constraint is only required on rows, only Step (I) in $\enc_{{\rm B}_{\rm RC}(n;p)}^{2}$ is sufficient and $N=n^2-1$. Although there is no improvement in the redundancy (the encoder in \cite{roth:1999} also use only one redundant bit), our encoder can be applied for a larger range of $t$, where $t \le n/c$ for any $c>2$. 
\end{remark}

\subsection{Modified SRT to Encode ${\rm Bal}_{\rm RC}(n;\epsilon)$ with One Redundant Bit}

In this section, we show that, for sufficiently large $n$, the redundancy to encode (decode) binary data to (from) ${\rm Bal}_{\rm RC}(n;\epsilon)$ can be further reduced from $\Theta(n)$ bits (from the encoding method in Section III-B) to only a single bit via the SRT.

Similar to the coding method in \cite{TT:2020}, the original data of length $(n^2-1)$ is prepended with 0. We then also remove all the {\em forbidden strings}, and append the replacement strings (starting with {\em marker} 1). To ensure that the encoding process terminates, replacement strings are of shorter lengths compared to the forbidden strings. 

On the other hand, the differences in this work are as follows. There are two types of forbidden strings: those comprising consecutive bits (to ensure the weight constraint over the rows) and those comprising bits that are of distance $n$ bits apart (to ensure the weight constraint over the columns). Consequently, we have two types of markers. 
\begin{definition}\label{forbidden-definition} 
For a binary sequence $\bx=x_1x_2\ldots x_m$, a subsequence $\by$ of $\bx$ is said to be $(\ell,\epsilon)$-{\em r-forbidden} if $\by=x_{i}x_{i+1}\ldots x_{i+\ell-1}$ for some $i$ and $\by$ is not $\epsilon$-balanced. On the other hand, a subsequence $\bz$ of $\bx$ is said to be  $(\ell,\epsilon)$-{\em c-forbidden} if $\bz=x_{j}x_{j+n}\ldots x_{j+(\ell-1)n}$ for some $j$ and $\bz$ is not $\epsilon$-balanced.
\end{definition}

Given $\epsilon>0$, let ${\bf F}(\ell,\epsilon)$ denote the set of all forbidden sequences of size $\ell$, that are not $\epsilon$-balanced. The following theorem provides an upper bound on the size of ${\bf F}(\ell,\epsilon)$.

\begin{theorem}\cite[Theorem 5]{TT:2020}\label{thm2} For $\epsilon>0, m\ge16$ and $\ell\le m$ such that $(1/\epsilon^2) \ln m \le \ell$, let 
$k=\ell-3-\log m$, there exists an one-to-one map $\Psi: {\bf F}(\ell,\epsilon) \to \{0,1\}^k$.
\end{theorem}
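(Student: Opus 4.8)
The plan is to reduce the existence of $\Psi$ to a single cardinality estimate. Any finite set $S$ admits a one-to-one map into $\{0,1\}^k$ if and only if $|S|\le 2^k$; concretely, one lists the elements of $S$ in lexicographic order and sends the $j$th element to the length-$k$ binary expansion of $j-1$. Interpreting $\log$ as $\log_2$, we have $2^k=2^{\ell-3-\log m}=2^{\ell}/(8m)$, so it suffices to prove the bound $|{\bf F}(\ell,\epsilon)|\le 2^{\ell}/(8m)$.

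First I would describe ${\bf F}(\ell,\epsilon)$ explicitly. By definition it is the set of $\bx\in\{0,1\}^\ell$ that fail to be $\epsilon$-balanced, i.e.\ those with $|{\rm wt}(\bx)-\ell/2|>\epsilon\ell$. Drawing $\bx$ uniformly at random from $\{0,1\}^\ell$, its weight is a sum of $\ell$ independent Bernoulli$(1/2)$ variables with mean $\ell/2$, so a Chernoff/Hoeffding tail bound gives $\Pr\bigl[|{\rm wt}(\bx)-\ell/2|\ge \epsilon\ell\bigr]\le 2e^{-2\epsilon^2\ell}$. Multiplying by $2^\ell$ converts this probability estimate into the counting bound $|{\bf F}(\ell,\epsilon)|\le 2^{\ell+1}e^{-2\epsilon^2\ell}$.

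The next step is to check that $2^{\ell+1}e^{-2\epsilon^2\ell}\le 2^{\ell}/(8m)$, which after dividing by $2^\ell$ and taking logarithms simplifies to $2\epsilon^2\ell\ge \ln(16m)$. Here the two hypotheses do exactly the work: the assumption $\ell\ge(1/\epsilon^2)\ln m$ yields $2\epsilon^2\ell\ge 2\ln m=\ln(m^2)$, and $m\ge 16$ gives $m^2\ge 16m$, hence $\ln(m^2)\ge\ln(16m)$. This establishes $|{\bf F}(\ell,\epsilon)|\le 2^k$ and therefore produces the desired injection $\Psi$.

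I expect the main obstacle to be not the existence of $\Psi$ but matching the constants so that the slack $2^{-3-\log m}$ built into $k$ is absorbed exactly by the tail bound under the stated hypotheses; choosing the sharper Hoeffding form $2e^{-2\epsilon^2\ell}$ rather than a looser Chernoff estimate is precisely what makes $m\ge 16$ and $\ell\ge(1/\epsilon^2)\ln m$ just sufficient. A secondary point, if one wants $\Psi$ and its inverse to run in linear time rather than merely to exist, is to replace the abstract list-and-index construction by an efficient enumerative ranking of the low- and high-weight strings; that refinement is where the genuine algorithmic work lies, though the statement as phrased only asserts existence.
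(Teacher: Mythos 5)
Your proof is correct: the Hoeffding-tail count $|{\bf F}(\ell,\epsilon)|\le 2^{\ell+1}e^{-2\epsilon^2\ell}$, combined with $\ell\ge(1/\epsilon^2)\ln m$ and $m\ge 16$, gives exactly $|{\bf F}(\ell,\epsilon)|\le 2^{\ell}/(8m)=2^k$, and indexing the set then yields the injection $\Psi$. This is essentially the same concentration-plus-counting argument as in the cited source (this paper imports Theorem~\ref{thm2} from \cite{TT:2020} without reproducing its proof), and your closing remark is the right caveat: for the encoder to run in linear time one realizes $\Psi$ via enumerative ranking of the low- and high-weight strings (or a look-up table of size roughly $2^\ell$ with $\ell=\Theta(\log m)$, as the paper does), rather than by abstract lexicographic listing.
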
 

The look-up table for $\Psi: {\bf F}(\ell,\epsilon) \to \{0,1\}^k$, which is roughly of size $2^\ell$, is needed for our encoding and decoding algorithms. To obtain efficient algorithms whose running time is linear in $m$, we set $\ell=\alpha \ln m=\Theta(\log m)$, for some fixed $\alpha \ge 1/\epsilon^2$. 
  
Note that, when all forbidden strings have been removed, the length of the current encoded sequence is strictly smaller than $n^2$. Similar to the coding methods in \cite{srt:2019,TT:2020}, we introduce the {\em extension phase} procedure to append bits to obtain an encoded sequence of length $n^2$ while the weight constraint is still preserved. Crucial to the correctness of our encoding algorithm is the following lemma. 
\begin{lemma}\label{epsilon2}
Given $\epsilon>0$, integers $n,\ell,s$, where $n=\ell s$, $n\epsilon\ge 2$. Set $\epsilon'=\epsilon/2$ and suppose that $\bx \in \{0,1\}^{n-1}$ where every window of size $\ell$ of $\bx$ is $\epsilon'$-balanced. We then have $\bx0$ and $\bx1$ that are both $\epsilon$-balanced. 
\end{lemma}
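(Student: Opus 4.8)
The plan is to reduce both claims to a single two-sided estimate on ${\rm wt}(\bx)$, since appending one symbol changes the weight by at most $1$: indeed ${\rm wt}(\bx0)={\rm wt}(\bx)$ and ${\rm wt}(\bx1)={\rm wt}(\bx)+1$. First I would partition the index set $\{1,2,\ldots,n-1\}$ of $\bx$ into $s-1$ consecutive disjoint windows of length $\ell$ (covering positions $1$ through $(s-1)\ell$) together with one trailing block of length $\ell-1$ (positions $(s-1)\ell+1$ through $n-1$); this is possible because $n=\ell s$, so that $n-1=(s-1)\ell+(\ell-1)$. Throughout I assume $s\ge 2$, which is the regime of interest (in the application $\ell=\Theta(\log n)$ and $s=n/\ell\to\infty$) and is what guarantees that genuine windows of size $\ell$ exist in $\bx$.

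Each of the $s-1$ full windows is a genuine window of size $\ell$ of $\bx$, hence $\epsilon'$-balanced, contributing weight in $\big[(1/2-\epsilon')\ell,(1/2+\epsilon')\ell\big]$. The trailing block of length $\ell-1$ is \emph{not} itself a window of size $\ell$, so here lies the one delicate point: I would instead consider the window $W$ consisting of positions $(s-1)\ell,\ldots,n-1$, which has length exactly $\ell$ and is therefore $\epsilon'$-balanced. Since $W$ consists of the trailing block together with exactly one extra bit $x_{(s-1)\ell}\in\{0,1\}$, the weight of the trailing block equals ${\rm wt}(W)-x_{(s-1)\ell}$ and hence lies in $\big[(1/2-\epsilon')\ell-1,\,(1/2+\epsilon')\ell\big]$. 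Summing the $s-1$ full-window bounds with this trailing-block bound, and using $\epsilon'=\epsilon/2$ together with $s\ell=n$, gives
\begin{equation*}
\Big(\tfrac12-\tfrac{\epsilon}{2}\Big)n-1 \;\le\; {\rm wt}(\bx)\;\le\;\Big(\tfrac12+\tfrac{\epsilon}{2}\Big)n.
\end{equation*}

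Finally I would verify the four resulting inequalities. For $\bx0$ the upper bound is immediate (since $\tfrac{\epsilon}{2}n\le \epsilon n$), while the lower bound ${\rm wt}(\bx0)\ge(1/2-\epsilon)n$ reduces to $(\epsilon/2)n\ge1$; for $\bx1$ the lower bound is immediate, while the upper bound ${\rm wt}(\bx1)={\rm wt}(\bx)+1\le(1/2+\epsilon)n$ again reduces to $(\epsilon/2)n\ge1$. Both binding cases therefore hold precisely because $n\epsilon\ge2$. The main obstacle is the off-by-one mismatch arising from $n-1$ not being a multiple of $\ell$ combined with the single appended bit: the overlapping-window trick converts the first discrepancy into a single-bit error, and the hypothesis $n\epsilon\ge2$ supplies exactly the slack $(\epsilon/2)n\ge1$ needed to absorb both errors within the $\epsilon$-balanced tolerance $\big[(1/2-\epsilon)n,(1/2+\epsilon)n\big]$.
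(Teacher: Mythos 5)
Your proof is correct and takes essentially the same route as the paper's: both reduce the claim to the two-sided estimate $(1/2-\epsilon')\ell s-1\le {\rm wt}(\bx)\le (1/2+\epsilon')\ell s$ coming from the $\epsilon'$-balanced windows, and then use $n\epsilon\ge 2$ (i.e., $(\epsilon/2)n\ge 1$) to absorb the $\pm1$ slack on each side. The only difference is one of detail: you make explicit, via the overlapping-window trick for the trailing block of length $\ell-1$ and the assumption $s\ge 2$, the justification for the $-1$ term that the paper's proof asserts without comment.
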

\begin{proof}
Since every window of size $\ell$ of $\bx$ is $\epsilon'$-balanced, we then have 
\begin{align*}
{\rm wt}(\bx1) \ge {\rm wt}(\bx0) &\ge (1/2-\epsilon')\ell s-1 \\
&=(1/2-\epsilon)n+n\epsilon/2- 1 \\
&\ge (1/2-\epsilon)n. 
\end{align*}
Similarly, we have
\begin{align*}
{\rm wt}(\bx0) \le {\rm wt}(\bx1) &\le  (1/2+\epsilon')\ell s+1 \\
&=(1/2+\epsilon)\ell s +(1-\epsilon\ell s/2) \\
&=(1/2+\epsilon)n +(1-\epsilon n/2) \\
&\le (1/2+\epsilon)n. 
\end{align*}
Thus, $\bx0$ or $\bx1$ are both $\epsilon$-balanced.
\end{proof}

We now present a linear-time algorithm $\enc_{{\rm Bal}_{\rm RC}(n;\epsilon)}^{2}$ to encode ${\rm Bal}_{\rm RC}(n;\epsilon)$ that incurs at most one redundant bit. 
\vspace{0.05in}

\noindent {\bf Encoder} $\enc_{{\rm Bal}_{\rm RC}(n;\epsilon)}^{2}$. Given $n, \epsilon>0$, we set $m=n^2, \epsilon'=\epsilon/2, \ell=\ceil{\alpha \log_e m}$ for some fixed number $\alpha\ge 1/\epsilon'^2$.  The source sequence $\bx\in\{0,1\}^{n^2-1}$. For simplicity, we assume $\log n$ is an integer, and $n=\ell s$ for some integer $s$. According to Theorem~\ref{thm2}, there exists an one-to-one map $\Psi: {\bf F}(\ell,\epsilon') \to \{0,1\}^k$, where $k=\ell-3-\log m=\ell-3-2\log n$. Here, $ {\bf F}(\ell,\epsilon')$ denotes the set of all forbidden sequences of size $\ell$, that are not $\epsilon'$-balanced and can be mapped one-to-one to binary sequences of length $\ell-3-2\log n$. 

 The algorithm contains three phases: {\em initial phase}, {\em replacement phase} and {\em extension phase}. Particularly, the extension phase includes {\em row extension} and {\em array extension}. 
\vspace{0.05in}

\noindent{\bf Initial phase.} The source sequence $\bx\in\{0,1\}^{n^2-1}$ is prepended with $0$, to obtain $\bc=0\bx \in \{0,1\}^{n^2}$. The encoder scans $\bc$ and if there is no forbidden subsequence, neither $(\ell,\epsilon')$-r-forbidden nor $(\ell,\epsilon')$-c-forbidden, it outputs $\Phi^{-1}(\bc)$, which is an array of size $n\times n$. Otherwise, it proceeds to the replacement phase. Observe that if there is no $(\ell,\epsilon')$-r-forbidden or $(\ell,\epsilon')$-c-forbidden in all rows and all columns then all rows and all columns are $\epsilon'$-balanced, and since $\epsilon'=\epsilon/2<\epsilon$, all rows and all columns are then $\epsilon$-balanced.
\vspace{0.05in}

\noindent{\bf Replacement phase.} Let the current word $\bc$ of length $N_0$. In the beginning, $N_0=n^2$. The encoder searches for the forbidden subsequences. Suppose the first forbidden subsequence starts at $x_i$ for some $1\le i\le N_0-\ell+1$. Let $\bp$ be the binary representation of the index $i$ of length $2\log n$. Let $\by=x_{i}x_{i+1}\ldots x_{i+\ell-1}$ and $\bz=x_{i}x_{i+n}\ldots x_{i+(\ell-1)n}$. 
\begin{itemize}
\item If $\by$ is $(\ell,\epsilon')$-r-forbidden, the encoder sets ${\rm R}=11 \bp \Psi(\by)$. It then removes $\by$ from $\bc$ and prepends ${\rm R}$ to $\bc$. The encoder repeats the replacement phase. 
\item  If $\by$ is not $(\ell,\epsilon')$-r-forbidden, and $\bz$ is $(\ell,\epsilon')$-c-forbidden, the encoder sets ${\rm R}=10 \bp \Psi(\bz)$. It then removes $\bz$ from $\bc$ and prepends ${\rm R}$ to $\bc$. The encoder repeats the replacement phase. 
\end{itemize} 
The encoder exits the replacement phase and proceeds to the extension phase if, after some replacement, the current sequence $\bc$ contains no $(\ell,\epsilon')$-r-forbidden (or $(\ell,\epsilon')$-c-forbidden) subsequence, or the current sequence is of length $n/2$. Otherwise, the encoder repeats the replacement phase. Note that such a replacement operation reduces the length of the sequence by one, since we remove a subsequence of $\ell$ bits and replace it by $2+2\log n+k=2+2\log n+(\ell-3-2\log n)=\ell-1$ (bits). Therefore, this procedure is guaranteed to terminate. 
We illustrate the idea of the replacement phase through Figure 2.
\vspace{0.05in}

\noindent{\bf Extension phase.} If the length of the current sequence $\bc$ is $N_0$ where $N_0<n$, the encoder appends a suffix of length $N_1=n^2-N_0$ to obtain a sequence of length $n^2$. Note that at the end of the replacement phase, the length of the current sequence is at least $n/2$. Suppose that $N_0=n \times q+r$ where $0\le q< n$, $0\le r<n$. If we fill the bits in $\bc$ to an array of size $n\times n$, we can fill $q$ rows and the $(q+1)$th row includes $r$ bits. From the replacement phase, in the worst case scenario, we have $N_0=n/2, q=0, r=n/2$. The extension phase includes two steps: the row extension step \big(to fulfill the $(q+1)$th row\big) and the array extension step \big(to fulfill the remaining $(n-q-1)$ rows\big). 

\noindent{\bf Row extension.} The encoder fulfills the $(q+1)$th row as follows. 
\begin{itemize} 
\item If $N_0=n/2$, i.e. $q=0$, $r=n/2$, the encoder simply concatenate the sequence $\bc$ and its complement $\overline{\bc}$ to obtain the first row. In this case, it is easy to see that the row is balanced. 
\item If $N_0>n/2>\ell$, we then observe that the output sequence from the replacement phase, $\bc$, contains no $(\ell,\epsilon')$-r-forbidden subsequence of length $\ell$. Let $\bw$ be the suffix of length $\ell$ of $\bc$. The encoder simply appends $\bw$ repeatedly until the $(q+1)$th row is fulfilled. Formally, let $j$ be the smallest integer such that $\bc' = \bc\bw^j$ is of length greater than $(q+1)n$. The encoder outputs the prefix of length $(q+1)n$ of $\bc'$. In this case, it is also easy to verify that, after the row extension phase, the $(q+1)$th row is $\epsilon'$-balanced. Since $\bc$ does not contain any $(\ell,\epsilon')$-r-forbidden subsequence, it remains to show that there is no $(\ell,\epsilon')$-r-forbidden subsequence in the suffix $\bw^j$. It is easy to see that repeating the vector $\bw$ clearly satisfies the constraint since every subsequence of size $\ell$ generated in this manner is a cyclic shift of the vector $\bw$, and since $\bw$ is $\epsilon'$-balanced, there is no $(\ell,\epsilon')$-r-forbidden subsequence.
\end{itemize}



\noindent{\bf Array extension.} After the row extension step, the array is of size $(q+1) \times n$, where $0\le q< n$, and every row is $\epsilon'$-balanced.
\begin{itemize}
\item If $q+1=n$ or $q=n-1$, the encoder simply outputs this array. Since every row is $\epsilon'$-balanced, which is also $\epsilon$-balanced, it remains to show that every column is $\epsilon$-balanced. From the replacement step, every window of size $\ell$ of the first $(n-1)$ bits in every column is $\epsilon'$-balanced. According to Lemma~\ref{epsilon2}, we then have every column is $\epsilon$-balanced.  
\item If $q+1 \le n/2$, the encoder simply fills in the next $(q+1)$ rows with the complement of the current array. After that the encoder fills the remaining rows alternately with balanced sequences $\bw=0101\ldots=(01)^{n/2}$, and $\bw'=1010\ldots=(10)^{n/2}$. In this case, it is easy to show that every row and every column is $\epsilon$-balanced. 
\item If $q+1 > n/2$, similar to the process in the row extension step, for each column $i$, the encoder sets $\by_i$ to be the sequence obtained by the first $q$ bits (i.e. except the bit in the $(q+1)$th row) and $\bw_i$ to be the suffix of length $\ell$ of $\by_i$. It then appends $\bw_i$ repeatedly until the $i$th column is fulfilled. Similar to the proof in the row extensive phase, for every column, the sequence obtained by $(n-1)$ bits, except the bit in the $(q+1)$th row, is $\epsilon'$-balanced. Again, according to Lemma~\ref{epsilon2}, every column is $\epsilon$-balanced. Note that all $(n-q-1)$ rows, that have been fulfilled, are actually repetition of some previous rows (i.e. the $j$th column is filled exactly as the $j-\ell-1$th column), therefore, they are all $\epsilon'$-balanced, and hence, are also $\epsilon$-balanced.
\end{itemize}
\vspace{0.05in}

The following result is then immediate.
\begin{theorem}
The Encoder $\enc_{{\rm Bal}_{\rm RC}(n;\epsilon)}^{2}$ is correct, i.e. $\enc_{{\rm Bal}_{\rm RC}(n;\epsilon)}^{2}(\bx) \in {\rm Bal}_{\rm RC}(n; \epsilon)$ for all $\bx\in \{0,1\}^{n^2-1}$. 
\end{theorem}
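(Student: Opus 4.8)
The plan is to verify correctness by tracking the two invariants that the encoder maintains throughout its three phases: that every row ultimately satisfies the $\epsilon$-balanced constraint, and that every column does as well. Since $\epsilon'=\epsilon/2$ and any $\epsilon'$-balanced sequence is automatically $\epsilon$-balanced, it suffices to argue that after the initial, replacement, and extension phases each row is $\epsilon'$-balanced (hence $\epsilon$-balanced) and each column is $\epsilon$-balanced via Lemma~\ref{epsilon2}. First I would establish that the replacement phase terminates and preserves the reading of the array: each replacement removes a forbidden window of $\ell$ bits and prepends a string of length $\ell-1$, so the sequence length strictly decreases, guaranteeing termination; and the marker bits $11$ versus $10$ together with $\bp$ and $\Psi(\cdot)$ allow the decoder to invert each step, so the map is injective. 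At the exit of the replacement phase, by construction $\bc$ contains no $(\ell,\epsilon')$-r-forbidden nor $(\ell,\epsilon')$-c-forbidden subsequence, which is exactly the hypothesis needed to invoke the extension arguments.

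Next I would handle the row extension. When $N_0=n/2$ the first row is $\bc\,\overline{\bc}$, which is exactly balanced, hence $\epsilon'$-balanced. When $N_0>n/2$, the encoder pads with repeated copies of the length-$\ell$ suffix $\bw$ of $\bc$; every window of length $\ell$ in the padded portion is a cyclic shift of $\bw$, and since $\bc$ had no r-forbidden window, $\bw$ is $\epsilon'$-balanced, so no new forbidden window is created and the completed $(q+1)$th row is $\epsilon'$-balanced. The key point is that appending cyclic shifts of an $\epsilon'$-balanced vector never violates the window constraint, so the row constraint is preserved across the extension.

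The crux is the array extension, which I would split along the three cases in the encoder. In the case $q+1=n$ the array is already full; every row is $\epsilon'$-balanced, and for the columns I apply Lemma~\ref{epsilon2}: each column has every length-$\ell$ window among its first $n-1$ bits $\epsilon'$-balanced (inherited from the absence of c-forbidden windows in the replacement output), so appending either bit leaves the whole column $\epsilon$-balanced. In the case $q+1\le n/2$, filling the next $q+1$ rows with the complement makes each column exactly balanced on those $2(q+1)$ bits, and the remaining rows $\bw=(01)^{n/2}$, $\bw'=(10)^{n/2}$ contribute equal numbers of $0$s and $1$s to every column; each row is manifestly $\epsilon$-balanced, so both constraints hold. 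In the case $q+1>n/2$, the encoder extends each column by repeating the length-$\ell$ suffix $\bw_i$ of its first $q$ bits, and by the same cyclic-shift reasoning as in the row extension together with Lemma~\ref{epsilon2}, every column becomes $\epsilon$-balanced; the filled rows are copies of earlier $\epsilon'$-balanced rows and hence $\epsilon$-balanced.

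The main obstacle I anticipate is the bookkeeping in the array-extension step, specifically verifying that the column extension via repeated suffixes does not disturb the row constraint and vice versa, and checking the boundary case $N_0=n/2$ (the worst case $q=0$, $r=n/2$) feeds consistently into the subsequent array extension. The technical heart is confirming that the hypothesis of Lemma~\ref{epsilon2}—every length-$\ell$ window of the first $n-1$ bits of each column being $\epsilon'$-balanced—genuinely survives both the replacement phase and the row/column padding, since this is what converts a per-window $\epsilon'$-guarantee into the global per-column $\epsilon$-guarantee; once that is pinned down, the remaining verifications are routine counting.
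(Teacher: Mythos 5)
Your proposal is correct and follows essentially the same route as the paper: the paper's proof is simply the correctness arguments embedded in the encoder description (termination of the replacement phase, the cyclic-shift argument for the suffix padding, and the three array-extension cases resolved via Lemma~\ref{epsilon2}), after which the theorem is declared immediate, and your write-up assembles exactly those pieces. One minor imprecision: the replacement phase can also exit when the sequence reaches length $n/2$ with forbidden windows still present, so your blanket claim that $\bc$ is forbidden-free at exit holds only when $N_0>n/2$; this does not harm your proof, since your $N_0=n/2$ case correctly relies on exact balancing via $\bc\,\overline{\bc}$ and the complement/alternating-row fill rather than on the absence of forbidden windows.
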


For completeness, we describe the corresponding decoder $\dec_{{\rm Bal}_{\rm RC}(n;\epsilon)}^{2}$ as follows. 
\vspace{0.05in}

\noindent {\bf Decoder} $\dec_{{\rm Bal}_{\rm RC}(n;\epsilon)}^{2}$. 

{\sc Input}: $A \in {\rm Bal}_{\rm RC}(n;\epsilon)$\\
{\sc Output}: $\bx \triangleq \dec_{{\rm Bal}_{\rm RC}(n;\epsilon)}^{2}(A) \in \{0,1\}^{n^2-1}$\\[-2mm]

\noindent {\bf Decoding procedure.} From an array A of size $n\times n$, the decoder first obtains the binary sequence $\bx=\Phi^{-1}(A)$ of length $n^2$. The decoder scans from left to right. If the first bit is 0, the decoder simply removes 0 and identifies the last $(n^2-1)$ bits are source data. On the other hand, if it starts with 11, the decoder takes the prefix of length $(\ell-1)$ and concludes that this prefix is obtained by a replacement of $(\ell,\epsilon')$-r-forbidden subsequence. In other words, the prefix is of the form $11\bp\Psi(\by)$, where $\bp$ is of length $2\log n$ and $\Psi(\by)$ is of length $k$. The decoder removes this prefix, adds the subsequence $\by=\Psi^{-1}(\Psi(\by))$ into position $i$, which takes $\bp$ as the binary representation. On the other hand, if it starts with 10, the decoder takes the prefix of length $(\ell-1)$ and concludes that this prefix is obtained by a replacement of $(\ell,\epsilon')$-c-forbidden subsequence. In other words, the prefix is of the form $10\bp\Psi(\bz)$, where $\bp$ is of length $2\log n$ and $\Psi(\bz)$ is of length $k$. The decoder removes this prefix, and adds the forbidden subsequence $\bz=\Psi^{-1}(\Psi(\bz))$ into position $i$, which takes $\bp$ as the binary representation. It terminates when the first bit is 0, and simply takes the following $(n^2-1)$ bits as the source data.

We illustrate the idea of the extension phase through the following example. 
\begin{example} Consider $n=6, \ell=3, \epsilon=1/6$, i.e. the weight of every row and every column is within $[2,4]$. The first example considers the worst case scenario.

\begin{small}
 \[\left( \begin{array}{cccccc}
1 & 1 & 0 & ? & ? & ?\\
? & ? & ? & ? & ? & ?\\
? & ? & ? & ? & ? & ?\\
? & ? & ? & ? & ? & ?\\
? & ? & ? & ? & ? & ?\\
? & ? & ? & ? & ? & ?\end{array} \right) \to \left( \begin{array}{cccccc}
1 & 1 & 0 & {\color{blue}{0}} & {\color{blue}{0}} & {\color{blue}{1}}\\
{\color{green}{0}} & {\color{green}{0}} & {\color{green}{1}} & {\color{green}{1}} & {\color{green}{1}} & {\color{green}{0}}\\
0 & 1 & 0 & 1 & 0 & 1\\
1 & 0 & 1 & 0 & 1 & 0\\
0 & 1 & 0 & 1 & 0 & 1\\
1 & 0 & 1 & 0 & 1 & 0\end{array} \right)\]
\end{small}
In this example, $N_0=n/2=3$ and $q=0$. The bits in blue denote the complement of the first $n/2$ bits in the row extension step, the bits in green denote the complement of the first row in the array extension step since $q+1=1<n/2$, and the last three rows are filled by sequences $\bw=010101$ and $\bw'=101010$ alternately. 

Another example is as follows.

\begin{small}
 \[\left( \begin{array}{cccccc}
1 & 1 & 0 & 1 & 0 & 1\\
0 & 1 & 0 & 1 & 0 & 0\\
1 & 0 & 1 & 0 & 1 & 0\\
0 & 1 & 1 & 0 & 0 & 1\\
0 & ? & ? & ? & ? & ?\\
? & ? & ? & ? & ? & ?\end{array} \right) \to \left( \begin{array}{cccccc}
1 & 1 & 0 & 1 & 0 & 1\\
0 & 1 & 0 & 1 & 0 & 0\\
1 & 0 & 1 & 0 & 1 & 0\\
0 & 1 & 1 & 0 & {\color{blue}{0}} & {\color{blue}{1}}\\
{\color{blue}{0}} & {\color{red}{0}} & {\color{red}{1}} & {\color{red}{0}} & {\color{red}{0}} & {\color{red}{1}}\\
{\color{green}{0}} & {\color{green}{1}} & {\color{green}{0}} & {\color{green}{1}} & {\color{green}{0}} & {\color{green}{0}}\end{array} \right)\]
\end{small}
In this example, $N_0=25$ and $q=4$. The encoder first fulfills the $(q+1)$th or the fifth row, and then proceeds to the array extension. 
Here, the bits in blue denote the suffix of length $\ell$ bits before the row extension step, and the bits in red denote the process that the encoder repeats the suffix until the fifth row is fulfilled, and finally the last row is filled as the repetition of the second row. Throughout the two examples, we can verify that every row and every column is $\epsilon$-balanced.
\end{example}

\begin{remark}\label{bound-for-n}
We now discuss the lower bound for $n$ so that the encoding algorithm works. Given $\epsilon\in (0,1/2)$, we require $n^2 \ge 1/\epsilon'^2 \ln n^2=8/\epsilon^2 \ln n$ and $n\epsilon \ge 2$. Since $n\gg \ln n$, a simple lower bound would be $n\ge 8/\epsilon^2$. Furthermore, when $p>1/2$, if we set $\epsilon=p-1/2$ then ${\rm Bal}_{\rm RC}(n;\epsilon) \subset {\rm B}_{\rm RC}(n;p)$. Hence, one may use our encoding method for ${\rm Bal}_{\rm RC}(n;\epsilon)$ to encode ${\rm B}_{\rm RC}(n;p)$, which uses one redundant bit. This improves the redundancy of the encoder in Subsection IV-A, which incurs $n+3$ redundant bits. 
\end{remark}



\begin{figure}[!t]
	Suppose that the forbidden subsequence that begins at index $i$. We set $\bp$ be the binary representation of $i$ of length $2\log n$. 
	\begin{itemize}
	
	\item [(a)] When $\by$ is an $(\ell,\epsilon')$-r-forbidden subsequence. 
	
	\vspace{1mm}
	
	\begin{center}
	\begin{tikzpicture}[x=1cm,y=1cm]
		
		\arraycolsep=3pt
		\tikzset{ff/.style = {rectangle, draw=red, line width=2pt, align=center}}
		\tikzset{cc/.style = {rectangle, draw=black, line width=2pt, align=center}}
		\tikzset{rr/.style = {rectangle, draw=blue, line width=2pt, align=center}}
		\tikzset{arrow/.style = {->,> = {Stealth[scale=1.5]},black,line width=2pt}}
		
		\node[cc, minimum height=6mm,minimum width=2cm] at (-2,0) {$\bc_1$};
		\node[cc, minimum height=6mm,minimum width=3cm] at (2.5,0) {$\bc_2$};
		\node[ff, minimum height=6mm,minimum width=2cm] at (0,0) {$\by$};
  		
		
		\Dimline[(-1,0.4)][(1,0.4)][$\ell$] ;
		\Dimline[(-1,-2)][(-3,-2)][$\ell-1$] ;
		\node[cc, minimum height=6mm,minimum width=2cm] at (0,-1.5) {$\bc_1$};
		\node[cc, minimum height=6mm,minimum width=3cm] at (2.5,-1.5) {$\bc_2$};
		\node[rr, minimum height=6mm,minimum width=2cm] at (-2,-1.5) {$11\bp\Psi(\by)$};
		
		\draw[arrow] (0,-0.3) to (-1.2,-1.2);
	\end{tikzpicture}
	\end{center}

	\item [(b)] When $\by$ is not $(\ell,\epsilon')$-r-forbidden subsequence and so, $\bz$ is $(\ell,\epsilon')$-c-forbidden. 
	
	\vspace{1mm}
	
	\begin{center}
		\begin{tikzpicture}[x=1cm,y=1cm]
			
			\arraycolsep=3pt
			\tikzset{ff/.style = {rectangle, draw=red, fill=white, line width=2pt, align=center, text=red}}
			\tikzset{cc/.style = {rectangle, draw=black, line width=2pt, align=center}}
			\tikzset{rr/.style = {rectangle, draw=blue, line width=2pt, align=center, text=blue}}
			\tikzset{arrow/.style = {->,> = {Stealth[scale=1.5]},black,line width=2pt}}
			
			\node[cc, minimum height=6mm,minimum width=1.5cm] at (-2.75,0) {$\bc_1$};
			\node[cc, minimum height=6mm,minimum width=1cm] at (-1,0) {$\bc_2$};
			\node[cc, minimum height=6mm,minimum width=1cm] at (0.5,0) {$\bc_3$};
			\node[cc, minimum height=6mm,minimum width=1cm] at (2,0) {$\cdots$};
			\node[cc, minimum height=6mm,minimum width=1cm] at (3.5,0) {$\bc_{\ell+1}$};
			
			\node[ff, minimum height=6mm,minimum width=5mm](z1) at (-1.75,0) {$z_1$};
			\node[ff, minimum height=6mm,minimum width=5mm](z2) at (-0.25,0) {$z_2$};
			\node[ff, minimum height=6mm,minimum width=5mm](z3) at (1.25,0) {$z_3$};
			\node[ff, minimum height=6mm,minimum width=5mm](zn) at (2.75,0) {$z_{\ell}$};
			
			\node[ff, minimum height=6mm,minimum width=2cm](z) at (0.5,-1.5) {$\bz$};
			
			\draw[-{latex}] (z1.south) to (z.north);
			\draw[-{latex}] (z2.south) to (z.north);
			\draw[-{latex}] (z3.south) to (z.north);
			\draw[-{latex}] (zn.south) to (z.north);
			
			\Dimline[(1.5,-2)][(-0.5,-2)][$\ell$];
			\Dimline[(-1.75,-3.5)][(-3.75,-3.5)][$\ell-1$] ;
			\Dimline[(-2,0.5)][(-0.5,0.5)][$n$];
			\Dimline[(-0.5,0.5)][(1,0.5)][$n$];
			
			\node[cc, minimum height=6mm,minimum width=2cm] at (-0.75,-3) {$\bc_1$};
			\node[cc, minimum height=6mm,minimum width=1cm] at (0.75,-3) {$\bc_2$};
			\node[cc, minimum height=6mm,minimum width=1cm] at (1.75,-3) {$\bc_3$};
			\node[cc, minimum height=6mm,minimum width=1cm] at (2.75,-3) {$\cdots$};
			\node[cc, minimum height=6mm,minimum width=1cm, fill=white] at (3.75,-3) {$\bc_{\ell+1}$};
			
			\node[rr, minimum height=6mm,minimum width=2cm](R) at (-2.75,-3) {$10\bp\Psi(\bz)$};
			
			\draw[arrow] (z.west) to (R.north);
		\end{tikzpicture}
	\end{center}

	\end{itemize}
	
	\caption{$(\ell,\epsilon')$-r-forbidden and $(\ell,\epsilon')$-c-forbidden replacement.
	}
	\label{fig2}
\end{figure}
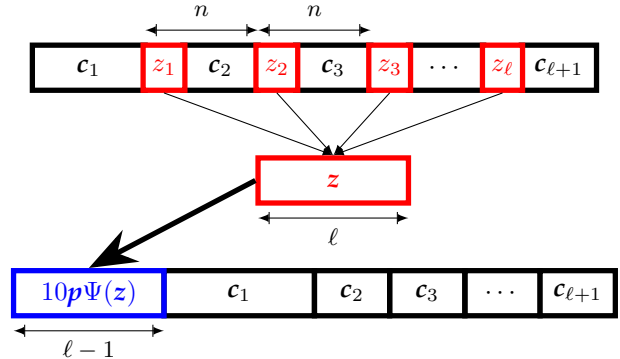





\section{2D Subarray Constrained Codes}

In this section, we are interested in the problem of designing efficient coding methods that encode (decode) binary data to (from) ${\rm B}_{\rm S}(n,m;p)$ and ${\rm Bal}_{\rm S}(n,m;\epsilon)$. Similar to the case of constructing 2D row/column constrained codes ${\rm B}_{\rm RC}(n;p)$ and ${\rm Bal}_{\rm RC}(n;\epsilon)$, the challenge in coding design is to find an efficient method to enforce the weight constraint in every subarray so that changing the weight of a subarray does not violate the weight-constraint in previous coded subarrays. 

Our main results in this section are summarised as follows.
\begin{itemize}
\item We use the construction of antipodal matching (as discussed in Section II-B and Section IV-A) to encode (decode) binary data to (from) ${\rm B}_{\rm S}(n,m;p)$ with at most $n$ redundant bits when $m=n-\Theta(1)$ and $p\ge 1/2$. The construction can be extended to obtain capacity-approaching encoder with at most $o(n^2)$ redundant bits when $m=n-o(n)$. 
\item We use SRT to encode (decode) binary data to (from) ${\rm Bal}_{\rm S}(n,m;\epsilon)$. For sufficiently large $n$, this method incurs at most one redundant bit. 
\end{itemize} 

We first recall the sets ${\rm B}_{\rm S}(n,m;p)$ and ${\rm Bal}_{\rm S}(n,m;\epsilon)$. Given $n,m,p,\epsilon$, where $m\le n, \epsilon\in [0,1/2], p\in[0,1]$ we set 
\begin{align*}
{\rm B}_{\rm S}(n,m;p) &\triangleq \Big\{A \in {\bA}_n: \text{every subarray } B \text{ of size } m\times m 
\text{ are $p$-bounded} \Big\}, \text{ and } \\
{\rm Bal}_{\rm S}(n,m;\epsilon)  &\triangleq \Big\{A \in {\bA}_n: \text{every subarray } B \text{ of size } m\times m  
\text{ are $\epsilon$-balanced} \Big\}.
\end{align*} 

\begin{example}
Consider $n=4, m=2, p=1/4$ and two given arrays $A$ and $B$ as follows. We observe that $A\in {\rm B}_{\rm RC}(n;p)$, however $A\notin {\rm B}_{\rm S}(n,m;p)$ since there is a subarray of size $2\times 2$ (highlighted in red), which is not $p$-bounded. On the other hand, we have $B\in {\rm B}_{\rm S}(n,m;p)$, and $B\notin {\rm B}_{\rm RC}(n;p)$ as the first column of $B$ (highlighted in blue) is not $p$-bounded.
 \[A=\left( \begin{array}{cccc}
0 & 0 & 0 & 0\\
0 & 0 & 0 & 0\\
0 & 0 & {\color{red}{0}} & {\color{red}{1}} \\
0 & 0 & {\color{red}{1}}  & {\color{red}{0}} \end{array} \right), B= \left( \begin{array}{cccc}
{\color{blue}{1}}  & 0 & 0 & 0\\
{\color{blue}{0}} & 0 & 1 & 0\\
{\color{blue}{1}} & 0 & 0 & 0\\
{\color{blue}{0}} & 0 & 0 & 1\end{array} \right)\]
\end{example}



\subsection{Antipodal Matching for ${\rm B}_{\rm S}(n,m;p)$}
Suppose that $m=n-k$ for some constant number $k=\Theta(1)$. Observe that ${\rm B}_{\rm S}(n,m;1/2) \subset {\rm B}_{\rm S}(n,m;p)$ for all $p\ge 1/2$. Hence, we aim to encode (decode) binary data to (from) ${\rm B}_{\rm S}(n,m;1/2)$ with at most $n$ redundant bits, i.e. for any array $A\in {\rm B}_{\rm S}(n,m;1/2)$, the weight of every subarray of size $m\times m$ is at most $m^2/2$. For simplicity, we suppose that $m$ is even and $2(k+1)^2 \le n$.
\vspace{0.05in}

We follow \cite{ROTH:ISIT} to construct the antipodal matchings $\phi_1$ for sequences of length $m^2$ and $\phi_2$ for sequences of length $m^2-m$. In other words, $\phi_1: \{0,1\}^{m^2} \to  \{0,1\}^{m^2}$ and $\phi_2: \{0,1\}^{m^2-m} \to  \{0,1\}^{m^2-m}$ such that for arbitrary $\bx \in \{0,1\}^{m^2}$ and $\by \in \{0,1\}^{m^2-m}$:
\begin{itemize}
\item ${\rm wt}(\phi_1(\bx)) = n - {\rm wt}(\bx),$ and ${\rm wt}(\phi_2(\by)) = n - {\rm wt}(\by),$
\item If ${\rm wt}(\bx)>m^2/2$  then $\phi_1(\bx)$  has all its 1's in positions where $\bx$ has 1's. In other words, suppose $\bx=x_1x_2\ldots x_{m^2}$ and $\bz=\phi_1(\bx)=z_1z_2\ldots z_{m^2}$, then $z_i =1$ implies $x_i =1$ for  $1\le i\le m^2$. We have the same argument for $\by$.
\item $\phi_1(\phi_1(\bx)) = \bx$ and $\phi_2(\phi_2(\by)) = \by$.
\end{itemize}
\vspace{0.05in}

We now describe the detailed construction of the subarray $p$-bounded encoder,  $\enc_{{\rm B}_{\rm S}(n,m;p)}$ when $p\ge 1/2$. 

\vspace{0.05in}
\noindent{\bf Subarray $p$-bounded encoder, $\enc_{{\rm B}_{\rm S}(n,m;p)}$}. 
\vspace{0.05in}


{\sc Input}: $\bx \in \{0,1\}^{n^2-n}$\\
{\sc Output}: $A \triangleq \enc_{{\rm B}_{\rm S}(n,m;p)}(\bx) \in {\rm B}_{\rm S}(n,m;1/2) \subset {\rm B}_{\rm S}(n,m;p)$\\[-2mm]

\begin{enumerate}[(I)]
\item Fill $n^2-n$ bits of $\bx$ to $A$ row by row to obtain a subarray of size $(n-1)\times n$ and suppose the last row of $A$ is $A_n=y_1y_2\ldots y_n$. 

\item Observe that there are $(k+1)^2$ subarrays of size $m\times m$ that we need to enforce the weight constraint. We set the order of the subarray as row by row and let $B_i$ be the $i$th subarray of $A$ for $1\le i\le (k+1)^2$. 

\item Using $\phi_1$ for $B_i$ where $1\le i\le k(k+1)$:
\begin{itemize}
\item If ${\rm wt}(B_i)>m^2/2$, set $y_{2i-1}=1$ and replace the entire subarray $B_i$ with $\phi_1(B_i)$.
\end{itemize}

\item Using $\phi_2$ for $(k+1)$ subarrays of size $(m-1)\times m$:
\begin{itemize}
\item For $k(k+1)+1\le i\le (k+1)^2$, set $C_i$ be the subarray obtained by removing the last row of $B_i$. In other words, $C_i$ is of size $(m-1)\times m$ which does not include the bits of the last row $A_n$. 
\item If ${\rm wt}(C_i)>(m^2-m)/2$, set $y_{2i-1}=1$ and replace the entire subarray $C_i$ with $\phi_2(C_i)$.
\end{itemize}

\item Filling the remaining bit of the $n$th row: 
\begin{itemize}
\item Set $y_{2i}=\overline{y_{2i-1}}$ for $1\le i\le (k+1)^2$. We then obtain $y_1y_2\ldots y_{2(k+1)^2}$ as a balanced sequence. 
\item Fill the remaining bits of the $n$th row with 0.  
\end{itemize}

\item Output $A$.
\end{enumerate}

\begin{theorem}\label{constrained proof}
The Encoder  $\enc_{{\rm B}_{\rm S}(n,m;p)}$ is correct. In other words,  $\enc_{{\rm B}_{\rm S}(n,m;p)}(\bx) \in {\rm B}_{\rm S}(n,m;1/2)$ for all $\bx \in \{0,1\}^{n^2-n}$. The redundancy is $n$ (bits).
\end{theorem}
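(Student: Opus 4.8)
The plan is to prove the two assertions separately: that every $m\times m$ subarray of the output $A$ has weight at most $m^2/2$ (so $A\in {\rm B}_{\rm S}(n,m;1/2)\subseteq {\rm B}_{\rm S}(n,m;p)$ for $p\ge 1/2$), and that the redundancy is $n$. The redundancy is immediate from the lengths: the input $\bx$ has $n^2-n$ bits while the output is an $n\times n$ array, i.e.\ $n^2$ bits, so $n^2-(n^2-n)=n$ bits are spent. All the content lies in the weight constraint. I would first partition the $(k+1)^2$ subarrays into the $k(k+1)$ subarrays lying entirely in the first $n-1$ rows (top-left row index at most $k$), which Step~(III) treats through $\phi_1$, and the $k+1$ subarrays whose bottom row is the last row $A_n$ (top-left row index $k+1$), whose upper $(m-1)\times m$ blocks $C_i$ Step~(IV) treats through $\phi_2$.

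The key observation, exactly as in the proof for $\enc^{2}_{{\rm B}_{\rm RC}(n;p)}$, is a monotonicity property of the replacements. A replacement is performed on a region only when its weight strictly exceeds the relevant threshold ($m^2/2$ for a full subarray, $(m^2-m)/2$ for a $C_i$), and in that case property~(2) of the antipodal matching guarantees that $\phi_1$ (resp.\ $\phi_2$) returns a word whose support is contained in the support of the input. Hence every replacement only deletes $1$'s from $A$ and never creates a new $1$. I would then run the argument over the fixed processing order: immediately after a region is handled its weight is at most the threshold, since either it was already at most the threshold and left untouched, or property~(1) replaces it by its complementary weight, which is strictly below half. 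Because every later replacement only removes $1$'s from $A$, the weight of an already-processed region can only decrease thereafter. Therefore at termination every $B_i$ with $i\le k(k+1)$ satisfies ${\rm wt}(B_i)\le m^2/2$ and every $C_i$ satisfies ${\rm wt}(C_i)\le (m^2-m)/2$. This is precisely the step where the heavy overlap among the $m\times m$ subarrays is handled, and I expect it to be the main obstacle: without support containment one could not rule out a later flip re-inflating an earlier subarray above its threshold.

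It remains to bound the full weight of the $k+1$ subarrays meeting the last row. For such a $B_i$ with column offset $c\in\{1,\dots,k+1\}$ we have ${\rm wt}(B_i)={\rm wt}(C_i)+w_i$, where $w_i$ is the weight of the length-$m$ window $y_c y_{c+1}\ldots y_{c+m-1}$ of $A_n$. By Step~(V) the $1$'s of $A_n$ all lie in the prefix $y_1\ldots y_{2(k+1)^2}$, which is balanced since $y_{2i}=\overline{y_{2i-1}}$; thus ${\rm wt}(A_n)=(k+1)^2$ and a fortiori $w_i\le (k+1)^2$. Provided $2(k+1)^2\le m$ (a harmless strengthening of the stated $2(k+1)^2\le n$ by the additive constant $k$, with $m$ even so that $m/2$ is an integer) this yields $w_i\le m/2$, and hence ${\rm wt}(B_i)\le (m^2-m)/2+m/2=m^2/2$. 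It is essential here that $C_i$ is defined to omit the last row, so that fixing the indicator and fill bits of $A_n$ in Step~(V) cannot perturb the weight guarantees already secured for the $C_i$'s in the monotonicity argument.

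Finally, although the statement asserts only correctness and redundancy, I would note that the map is a genuine encoder: the balanced pairs $y_{2i-1}y_{2i}$ record which subarrays were flipped, and since each $\phi_j$ is an involution ($\phi_j(\phi_j(\cdot))=\mathrm{id}$), undoing the replacements in the reverse of the encoding order recovers the first $n^2-n$ bits, so a decoder exists. The only delicate points are thus the monotonicity argument for the overlaps and the routine verification that every length-$m$ window of $A_n$ has weight at most $m/2$ for the given range of $n,m,k$.
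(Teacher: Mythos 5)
Your proposal is correct, and its skeleton coincides with the paper's proof: the same partition into the $k(k+1)$ subarrays confined to the first $n-1$ rows and the $k+1$ subarrays meeting $A_n$, the same monotonicity argument (a replacement is performed only when the weight exceeds half, so property (2) of the antipodal matching gives support containment, and later replacements can only delete $1$'s from already-processed regions), and the same decomposition ${\rm wt}(B_i)={\rm wt}(C_i)+w_i$ for the bottom subarrays. The one genuine divergence is how $w_i$ is bounded, and here your route is actually the safer one. The paper asserts that \emph{every} $m$ consecutive bits of $A_n$ form a balanced sequence, but this holds only for windows aligned with the complementary pairs $(y_{2i-1},y_{2i})$: a window starting at an even column $c$ --- which does occur for bottom subarrays once $k\ge 1$, since their column offsets range over $1,\dots,k+1$ --- splits a pair at each end and can have weight up to $(m-2)/2+2=m/2+1$ (take $y_c=1$, i.e.\ the corresponding indicator $y_{c-1}=0$, together with an indicator $y_{c+m-1}=1$, which nothing in the construction forbids), so the paper's bound ${\rm wt}(B_i)\le (m^2-m)/2+m/2$ is not justified as written for unaligned offsets. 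Your cruder bound $w_i\le {\rm wt}(A_n)=(k+1)^2\le m/2$ is insensitive to pair alignment and closes this off-by-one gap, at the price of the marginally stronger hypothesis $2(k+1)^2\le m$ in place of the paper's $2(k+1)^2\le n$ --- a shift by the additive constant $k$ that is harmless in the regime $k=\Theta(1)$ (and even $k=o(n)$) considered here. The remaining ingredients --- the redundancy count $n^2-(n^2-n)=n$ and the decodability remark via the involution property $\phi_j(\phi_j(\cdot))=\mathrm{id}$ --- match the paper, which presents the decoder separately.
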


\begin{proof}
Suppose that $A=\enc_{{\rm B}_{\rm S}(n,m;p)}(\bx)$ for some $\bx  \in \{0,1\}^{n^2-n}$. We now show that every subarray of size $m\times m$ have weight at most $m^2/2$. We set the order of the subarray as row by row and let $B_i$ be the $i$th subarray of $A$ for $1\le i\le (k+1)^2$. 

We first show that the weight of $B_i$ is at most $m^2/2$ for $1\le i\le k(k+1)$.
From step (III), suppose that a subarray $B_i$ satisfy the weight constraint and the encoder proceeds to replace some subarray $B_j$ with $\phi_1(B_j)$ where $B_j$ has some overlapping bits with $B_i$ and ${\rm wt}(B_j) > m^2/2$. Note that $\phi_1(B_j)$ has all its 1's in positions where $B_j$ has 1's and ${\rm wt}(\phi_1(B_j))\le m^2/2$. Therefore, whenever the encoder performs replacement in $B_j$, it does not increase the weight of $B_i$, the '0' bits in $B_i$ will not change to '1', only the '1' bits in $B_i$ will either stay as 1 or change to 0. Hence, it does not violate the weight constraint in $B_i$. Thus, at the end of step (III), the encoder ensures that the weight of $B_i$ is at most $m^2/2$ for $1\le i\le k(k+1)$. Similarly, at the end of step (IV), we have the weight of $C_i$ is at most $(m^2-m)/2$ for $k(k+1)+1\le i\le (k+1)^2$. 

It remains to show that the weight of $B_i$ is at most $m^2/2$ for $k(k+1)+1\le i\le (k+1)^2$. Observe that, from the step (V), every $m$ consecutive bits of the last row $A_n$ form a balanced sequence (here $m$ is even). For $k(k+1)+1\le i\le (k+1)^2$, each $B_i$ is the concatenation of $C_i$ (the weight of $C_i$ is at most $(m^2-m)/2$) and $m$ consecutive bits of the last row $A_n$ (which form a balanced sequence). Thus, the weight of $B_i$ is also at most $(m^2-m)/2+m/2=m^2/2$. 
\end{proof}
For completeness, we describe the corresponding decoder $\dec_{{\rm B}_{\rm S}(n,m;p)}$ as follows. 

\vspace{0.05in}
\noindent{\bf Subarray $p$-bounded decoder, $\dec_{{\rm B}_{\rm S}(n,m;p)}$}. 
\vspace{0.05in}

{\sc Input}: $A \in {\rm B}_{\rm S}(n,m;p)$\\
{\sc Output}: $\bx \triangleq\dec_{{\rm B}_{\rm S}(n,m;p)}(A) \in \{0,1\}^{n^2-n}$\\[-2mm]

\begin{enumerate}[(I)]
\item Suppose that $A_n=y_1y_2\ldots y_n$. We set the order of the subarray as row by row and let $B_i$ be the $i$th subarray of $A$ of size $m\times m$ for $1\le i\le (k+1)^2$. For $k(k+1)+1\le i\le (k+1)^2$, set $C_i$ be the subarray obtained by removing the last row of $B_i$. In other words, $C_i$ is of size $(m-1)\times m$. 
\item For $1\le i\le k(k+1)$, if $y_{2i-1}=1$, replace $B_i$ with $\phi_1(B_i)$.  
\item For $k(k+1)+1\le i\le (k+1)^2$, if $y_{2i-1}=1$, replace $C_i$ with $\phi_2(C_i)$. 
\item Set $A'$ be the array obtained by the first $(n-1)$ rows of the current array.
\item Output $\bx\triangleq \Phi^{-1}(A') \in \{0,1\}^{n^2-n}$.
\end{enumerate}
\vspace{0.05in}

We illustrate the idea of the encoding process through the following example.

\begin{example} 
Consider $n=9, m=8$, i.e. the weight of every subarray of size $8\times 8$ is at most 32. We construct two antipodal matching: $\phi_1$ for sequences of length 64 and $\phi_2$ for sequences of length 56. 
There are four subarrays of $A$ to enforce the weight constraint and the last row $A_9=y_1y_2\ldots y_9$ is for the redundant bits. 
\begin{small}
 \[A=\left( \begin{array}{ccccccccc}
1 & 0 & 0 & 0 & 0 & 0& 1 & 0 & 1\\
0 & 1 & 1 & 1 & 1 & 0& 0 & 0 & 1\\
0 & 0 & 1 & 1 & 1 & 1& 1 & 1 & 1\\
0 & 0 & 0 & 1 & 1 & 0& 0 & 1 & 1\\
1 & 1 & 1 & 1 & 1 & 1& 1 & 1 & 1\\
1 & 0 & 1 & 0 & 0 & 1& 1 & 0 & 1\\
1 & 1 & 1 & 0 & 0 & 1& 1 & 1 & 1\\
0 & 0 & 1 & 0 & 1 & 1& 1 & 1 & 1\\
y_1 & y_2 & y_3 & y_4 & y_5 & y_6& y_7 & y_8 & y_9\end{array} \right) \]
\end{small}
The encoder first checks $B_1$ and since ${\rm wt}(B_1)=38>32$, it replaces $B_1$ with $\phi_1(B_1)$ (as highlighted in blue), where ${\rm wt}(\phi(B_1))=26$, and sets $y_1=1$. 
\begin{small}
 \[A=\left( \begin{array}{ccccccccc}
{\color{blue}{0}} & {\color{blue}{0}} & {\color{blue}{0}} & {\color{blue}{0}} & {\color{blue}{0}} & {\color{blue}{0}}& {\color{blue}{1}} &{\color{blue}{0}} & 1\\
{\color{blue}{0}} & {\color{blue}{1}} & {\color{blue}{1}} & {\color{blue}{0}} & {\color{blue}{0}} & {\color{blue}{0}}& {\color{blue}{0}} & {\color{blue}{0}} & 1\\
{\color{blue}{0}} & {\color{blue}{0}} & {\color{blue}{1}} & {\color{blue}{0}} & {\color{blue}{1}} & {\color{blue}{1}} & {\color{blue}{1}} & {\color{blue}{1}} & 1\\
{\color{blue}{0}} & {\color{blue}{0}} & {\color{blue}{0}} & {\color{blue}{1}} & {\color{blue}{1}} & {\color{blue}{0}} & {\color{blue}{0}} & {\color{blue}{0}} & 1\\
{\color{blue}{0}} & {\color{blue}{1}} & {\color{blue}{1}} & {\color{blue}{1}} & {\color{blue}{1}} & {\color{blue}{1}} & {\color{blue}{0}} & {\color{blue}{1}} & 1\\
{\color{blue}{0}} & {\color{blue}{0}} & {\color{blue}{1}} & {\color{blue}{0}} & {\color{blue}{0}} & {\color{blue}{1}} & {\color{blue}{1}} & {\color{blue}{0}} & 1\\
{\color{blue}{0}} & {\color{blue}{1}} & {\color{blue}{1}} & {\color{blue}{0}} & {\color{blue}{0}} & {\color{blue}{0}} & {\color{blue}{0}} & {\color{blue}{0}} & 1\\
{\color{blue}{0}} & {\color{blue}{0}} & {\color{blue}{1}} & {\color{blue}{0}} & {\color{blue}{1}} & {\color{blue}{1}} & {\color{blue}{1}} & {\color{blue}{1}} & 1\\
{\color{red}{1}} & y_2 & y_3 & y_4 & y_5 & y_6& y_7 & y_8 & y_9\end{array} \right) \]
\end{small}

The encoder then checks $B_2$ and since ${\rm wt}(B_2)=34>32$, it replaces $B_2$ with $\phi_1(B_2)$ (as highlighted in blue), where ${\rm wt}(\phi(B_2))=30$, and sets $y_3=1$. 

\begin{small}
 \[A=\left( \begin{array}{ccccccccc}
0 & {\color{blue}{0}} & {\color{blue}{0}} & {\color{blue}{0}} & {\color{blue}{0}} & {\color{blue}{0}}& {\color{blue}{1}} &{\color{blue}{0}} & {\color{blue}{1}}\\
0 & {\color{blue}{1}} & {\color{blue}{1}} & {\color{blue}{0}} & {\color{blue}{0}} & {\color{blue}{0}}& {\color{blue}{0}} & {\color{blue}{0}} & {\color{blue}{1}}\\
0 & {\color{blue}{0}} & {\color{blue}{1}} & {\color{blue}{0}} & {\color{blue}{1}} & {\color{blue}{0}} & {\color{blue}{0}} & {\color{blue}{1}} & {\color{blue}{1}}\\
0 & {\color{blue}{0}} & {\color{blue}{0}} & {\color{blue}{1}} & {\color{blue}{1}} & {\color{blue}{0}} & {\color{blue}{0}} & {\color{blue}{0}} & {\color{blue}{0}}\\
0 & {\color{blue}{1}} & {\color{blue}{1}} & {\color{blue}{1}} & {\color{blue}{1}} & {\color{blue}{1}} & {\color{blue}{0}} & {\color{blue}{1}} & {\color{blue}{0}}\\
0 & {\color{blue}{0}} & {\color{blue}{1}} & {\color{blue}{0}} & {\color{blue}{0}} & {\color{blue}{1}} & {\color{blue}{1}} & {\color{blue}{0}} & {\color{blue}{1}}\\
0& {\color{blue}{1}} & {\color{blue}{1}} & {\color{blue}{0}} & {\color{blue}{0}} & {\color{blue}{0}} & {\color{blue}{0}} & {\color{blue}{0}} & {\color{blue}{1}}\\
0 & {\color{blue}{0}} & {\color{blue}{1}} & {\color{blue}{0}} & {\color{blue}{1}} & {\color{blue}{1}} & {\color{blue}{1}} & {\color{blue}{1}} & {\color{blue}{1}}\\
{\color{red}{1}} & y_2 & {\color{red}{1}} & y_4 & y_5 & y_6& y_7 & y_8 & y_9\end{array} \right) \]
\end{small}

The encoder then checks $C_1$ and $C_2$ as two subarrays of size $8\times 7$, and observe that their weights are both smaller than 28. It then sets $y_5=y_7=0$. Since ${\color{red}{y_1y_3y_5y_7=1100}}$, it sets ${\color{green}{y_2y_4y_6y_8=0011}}$ and sets $y_9=0$. The final output is as follows.

\begin{small}
 \[A=\left( \begin{array}{ccccccccc}
0 & 0 & 0 & 0 & 0 & 0& 1 & 0 & 1\\
0 & 1 & 1 & 0 & 0 & 0& 0 & 0 & 1\\
0 & 0 & 1 & 0 & 1 & 0& 0 & 1 & 1\\
0 & 0 & 0 & 1 & 1 & 0& 0 & 0 & 0\\
0 & 1 & 1 & 1 & 1 & 1& 0 & 1 & 0\\
0 & 0 & 1 & 0 & 0 & 1& 1 & 0 & 1\\
0 & 1 & 1 & 0 & 0 & 0& 0 & 0 & 1\\
0 & 0 & 1 & 0 & 1 & 1& 1 & 1 & 1\\
{\color{red}{1}} & {\color{green}{0}} & {\color{red}{1}} & {\color{green}{0}} &  {\color{red}{0}} & {\color{green}{1}}&  {\color{red}{0}} & {\color{green}{1}} & 0\end{array} \right) \]
\end{small}
\end{example}

\begin{remark} To obtain capacity-approaching codes, the construction can be further extended for $m=n-k$ when $k$ is no longer a constant. Since the redundancy of the encoder is $2(k+1)^2$, in oder to get the asymptotic rate of the encoder to be one, it is sufficient to require $k=o(n)$.
\end{remark}

\subsection{SRT for ${\rm Bal}_{\rm S}(n,m;\epsilon)$}

In this subsection, we show that SRT is an efficient method to encode ${\rm Bal}_{\rm S}(n,m;\epsilon)$. Similar to the results in Section IV, we show that for sufficiently large $n$, the coding method incurs at most one redundant bit. Recall the definition of $(\ell,\epsilon)$-r-forbidden and $(\ell,\epsilon)$-c-forbidden in Definition~\ref{forbidden-definition}.

\begin{lemma}\label{idea}
Suppose that $A$ is a binary array of size $n\times n$. If there is no $(m,\epsilon)$-r-forbidden in any row of $A$ or there is no $(m,\epsilon)$-c-forbidden in any column of $A$ then $A \in {\rm Bal}_{\rm S}(n,m;\epsilon)$.
\end{lemma}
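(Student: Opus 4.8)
The plan is to exploit that the weight of an $m\times m$ subarray is \emph{additive} over its rows and, equally, over its columns: any $m\times m$ subarray decomposes either as a stack of $m$ horizontal windows of length $m$ (its rows) or as a juxtaposition of $m$ vertical windows of length $m$ (its columns). Under either hypothesis of the lemma each of these length-$m$ windows is $\epsilon$-balanced, so summing the per-window weight bounds yields the weight bound for the whole subarray. Since the hypothesis is a disjunction, I would argue by two symmetric cases.

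First I would fix an arbitrary $m\times m$ subarray $B$ of $A$, say with top-left corner at row $r$ and column $c$, so that $B$ occupies rows $r,r+1,\ldots,r+m-1$ and columns $c,c+1,\ldots,c+m-1$ (note $c\le n-m+1$, so all windows referenced below genuinely lie inside $A$). In the case that $A$ contains no $(m,\epsilon)$-r-forbidden subsequence, observe that for each $t\in\{0,1,\ldots,m-1\}$ the $(t+1)$th row of $B$ is exactly the block of $m$ consecutive entries $A_{r+t,c}A_{r+t,c+1}\ldots A_{r+t,c+m-1}$ of row $A_{r+t}$, which is a length-$m$ window of consecutive bits in that row. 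By hypothesis this window is $\epsilon$-balanced, so its weight lies in $[(1/2-\epsilon)m,(1/2+\epsilon)m]$. Summing over the $m$ rows of $B$ gives
\begin{align*}
{\rm wt}(B)=\sum_{t=0}^{m-1}{\rm wt}(B_{t+1})\in\big[(1/2-\epsilon)m^2,\,(1/2+\epsilon)m^2\big],
\end{align*}
so $B$ is $\epsilon$-balanced.

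The case where $A$ contains no $(m,\epsilon)$-c-forbidden subsequence is entirely symmetric: here each column of $B$ is a length-$m$ window of consecutive bits along a column of $A$, that is, the entries of $\Phi(A)$ at positions $n$ apart as in Definition~\ref{forbidden-definition}, and is therefore $\epsilon$-balanced. Summing the same bounds over the $m$ columns of $B$ again yields ${\rm wt}(B)\in[(1/2-\epsilon)m^2,(1/2+\epsilon)m^2]$. As $B$ was an arbitrary $m\times m$ subarray, in either case every such subarray is $\epsilon$-balanced, which is precisely the assertion $A\in{\rm Bal}_{\rm S}(n,m;\epsilon)$.

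The argument carries no analytic difficulty; the only point requiring care is the bookkeeping that identifies a row (respectively a column) of the subarray $B$ with a genuine length-$m$ window as named in Definition~\ref{forbidden-definition} under the row-by-row reading $\Phi$, together with the observation that additivity of weight turns the $m$ per-window bounds $[(1/2-\epsilon)m,(1/2+\epsilon)m]$ into the bound $[(1/2-\epsilon)m^2,(1/2+\epsilon)m^2]$ for $B$. Thus the main (and essentially only) obstacle is the indexing correspondence, not the estimate itself.
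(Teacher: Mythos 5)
Your proposal is correct and follows essentially the same route as the paper's proof: decompose an arbitrary $m\times m$ subarray into its rows (or columns), note that each is a length-$m$ window that is $\epsilon$-balanced by hypothesis, and sum the per-window weight bounds to conclude the subarray is $\epsilon$-balanced. The only difference is that you spell out the indexing and the symmetric column case explicitly, where the paper treats the row case and leaves the rest as evident.
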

\begin{proof}
Clearly, if there is no $(m,\epsilon)$-r-forbidden in any row of $A$, then every $m$ consecutive bits in each row forms an $\epsilon$-balanced sequence. If we consider any subarray of size $m\times m$, since each row is $\epsilon$-balanced, we then have the subarray is also $\epsilon$-balanced.
\end{proof}

According to Lemma~\ref{idea}, one may view an array of size $n\times n$ as a binary sequence of length $n^2$ and then uses our SRT coding method (as presented in Section IV) to enforce the $\epsilon$-balanced weight constraint over every $m$ consecutive bits. We summarise the result as follows. 

\begin{theorem}[Modified Theorem~\ref{theoremSRT}]\label{theoremSRT-modified}
Given $n>0,\epsilon\in (0,1/2)$. For $(1/\epsilon^2) \ln (n^2) \le m \le n$, there exists linear-time algorithms $\enc_{{\rm Bal}_{\rm S}(n,m;\epsilon)} :\{0,1\}^{n^2-1} \to\{0,1\}^{n^2}$ and $\dec_{{\rm Bal}_{\rm S}(n,m;\epsilon)}: {\rm Bal}_{\rm S}(n,m;\epsilon) \to \{0,1\}^{n^2-1}$ such that for all $\bx\in \{0,1\}^{n^2-1}$ if $A=\enc_{{\rm Bal}_{\rm S}(n,m;\epsilon)}(\bx)$, which is an array of size $n\times n$, then for every window $\bw$ of size $m$ of $A_i$, ${\rm wt}(\bw) \in [(1/2-\epsilon)m,(1/2+\epsilon)m]$ for all $1\le i\le n$. In other words, we have $A \in {\rm Bal}_{\rm S}(n,m;\epsilon)$. Furthermore, we have $\dec_{{\rm Bal}_{\rm S}(n,m;\epsilon)} \circ \enc_{{\rm Bal}_{\rm S}(n,m;\epsilon)}(\bx) \equiv \bx$ for all $\bx\in \{0,1\}^{n^2-1}$.
\end{theorem}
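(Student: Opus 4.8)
The plan is to reduce the two-dimensional subarray problem to a one-dimensional window-constrained problem and then invoke Theorem~\ref{theoremSRT} essentially as a black box. The conceptual bridge is Lemma~\ref{idea}: it suffices to guarantee that every length-$m$ window lying within a single row of $A$ is $\epsilon$-balanced, because then every $m\times m$ subarray consists of $m$ sub-rows each of weight in $[(1/2-\epsilon)m,(1/2+\epsilon)m]$, whose total weight necessarily lands in $[(1/2-\epsilon)m^2,(1/2+\epsilon)m^2]$. Reading the array row by row via $\Phi$ turns ``every within-row length-$m$ window is $\epsilon$-balanced'' into a statement about windows of a single binary sequence of length $n^2$, which is exactly the type of constraint that Theorem~\ref{theoremSRT} enforces with a single redundant bit.

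Concretely, I would instantiate Theorem~\ref{theoremSRT} with sequence length $n^2$, window size $\ell=m$, and balance parameters $p_1=1/2-\epsilon$, $p_2=1/2+\epsilon$, so that $c=\min\{1/2-p_1,\, p_2-1/2\}=\epsilon$. Its hypothesis $(1/c^2)\ln(n^2)\le \ell\le n^2$ then reads $(1/\epsilon^2)\ln(n^2)\le m\le n^2$, which is implied by the stated range $(1/\epsilon^2)\ln(n^2)\le m\le n$ since $m\le n\le n^2$. This yields a linear-time map $\enc_{\rm seq}:\{0,1\}^{n^2-1}\to\{0,1\}^{n^2}$ whose output $\by$ has every length-$m$ window $\epsilon$-balanced, together with an inverse $\dec_{\rm seq}$. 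I would then define
\begin{align*}
\enc_{{\rm Bal}_{\rm S}(n,m;\epsilon)}(\bx) &\triangleq \Phi^{-1}\big(\enc_{\rm seq}(\bx)\big), \\
\dec_{{\rm Bal}_{\rm S}(n,m;\epsilon)}(A) &\triangleq \dec_{\rm seq}\big(\Phi(A)\big).
\end{align*}

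For correctness, fix $\bx\in\{0,1\}^{n^2-1}$ and let $A=\Phi^{-1}\big(\enc_{\rm seq}(\bx)\big)$. Every length-$m$ window contained in a single row $A_i$ is, under the row-by-row reading $\Phi$, a length-$m$ window of $\enc_{\rm seq}(\bx)$, hence $\epsilon$-balanced; this is precisely the ``no $(m,\epsilon)$-r-forbidden subsequence in any row'' hypothesis of Lemma~\ref{idea}, so $A\in{\rm Bal}_{\rm S}(n,m;\epsilon)$. Inversion is immediate because $\Phi$ is a bijection: $\dec_{{\rm Bal}_{\rm S}(n,m;\epsilon)}\circ\enc_{{\rm Bal}_{\rm S}(n,m;\epsilon)}=\dec_{\rm seq}\circ\Phi\circ\Phi^{-1}\circ\enc_{\rm seq}=\dec_{\rm seq}\circ\enc_{\rm seq}=\mathrm{id}$. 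Linear-time complexity in $n^2$ follows because $\Phi,\Phi^{-1}$ are mere reshapings while $\enc_{\rm seq},\dec_{\rm seq}$ run in time linear in the sequence length $n^2$.

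The only real subtlety, and the point I expect to need the most care, is that Theorem~\ref{theoremSRT} applied to the whole length-$n^2$ sequence enforces the $\epsilon$-balanced constraint on \emph{all} length-$m$ windows, including those straddling the boundary between two consecutive rows, whereas Lemma~\ref{idea} only requires it on windows confined to a single row. This is harmless: the constraint we impose is a superset of the one we need, it remains within the regime handled by Theorem~\ref{theoremSRT} (which only requires $m\le n^2$), and it still implies the within-row condition. I would state this explicitly to avoid conflating the ``r-forbidden'' windows of the folded sequence with the genuine within-row windows invoked in Lemma~\ref{idea}.
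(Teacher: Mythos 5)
Your proposal is correct and is essentially the paper's own argument: the paper likewise reads the array row by row as a length-$n^2$ sequence, applies the SRT encoder of Theorem~\ref{theoremSRT} with window size $\ell=m$ (so $c=\epsilon$ and the hypothesis becomes $(1/\epsilon^2)\ln(n^2)\le m\le n^2$, implied by $m\le n$), and concludes via Lemma~\ref{idea}. Your explicit remark that the folded sequence also constrains cross-row windows, which is a harmless strengthening of what Lemma~\ref{idea} needs, is a point the paper leaves implicit but does not change the argument.
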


\begin{remark}
Given $\epsilon\in(0,1/2)$, Theorem~\ref{theoremSRT-modified} requires $(1/\epsilon^2) \ln (n^2) \le m \le n$, or $(2/\epsilon^2) \ln n \le m \le n$. Since $n\gg \ln n$, Theorem~\ref{theoremSRT-modified} works for a wide range of $m$ with respect to $n$.
\end{remark}

\section{Conclusion}
We have presented efficient encoding/decoding methods for two types of constraints over two-dimensional binary arrays: the $p$-bounded constraint and the $\epsilon$-balanced constraint. The constraint is enforced over either every row and every column, regarded as the 2D row/column (RC) constrained codes, or over every subarray, regarded as the 2D subarray constrained codes. The coding methods are based on: the divide and conquer algorithm and a modification of the Knuth's balancing technique, the sequence replacement technique, and the construction of antipodal matching as introduced in \cite{ROTH:ISIT}. For certain codes parameters, we have shown that there exist linear-time encoding/decoding algorithms that incur at most one redundant bit. 

To conclude, we discuss open problems and possible future directions of research.
\begin{enumerate}
\item {\em Study the channel capacity.} The capacity of the constraint channels are defined by 
\begin{align*}
{\bf c}_{\rm RC}(p) &\triangleq \lim_{n \to \infty} \frac{\log |{\rm B}_{\rm RC}(n;p)|}{n^2}, {\bf c}_{\rm RC}({\epsilon}) \triangleq \lim_{n \to \infty} 1/n^2 \log |{\rm Bal}_{\rm RC}(n;\epsilon)|, \text{ and } \\
{\bf c}_{\rm S}(m;p) &\triangleq \lim_{n \to \infty} \frac{\log |{\rm B}_{\rm S}(n,m;p)|}{n^2}, {\bf c}_{\rm S}(m;{\epsilon}) \triangleq \lim_{n \to \infty} 1/n^2 \log |{\rm Bal}_{\rm S}(n,m;\epsilon)|.
\end{align*}
In this work we show that ${\bf c}_{\rm RC}(p)=1$ for all $p\ge 1/2$, and ${\bf c}_{\rm RC}({\epsilon})=1$ for all $\epsilon$. On the other hand, the values ${\bf c}_{\rm S}(m;p)$, ${\bf c}_{\rm S}(m;\epsilon)$ remain unknown for fixed $m$, which is deferred to our future research work. Although we can design efficient encoders for ${\rm Bal}_{\rm S}(n,m;p)$ when $m =n-o(n)$ or ${\rm Bal}_{\rm S}(n,m;\epsilon)$ when $m \ge (2/\epsilon^2) \ln n$, a general construction for arbitrary values of $m$ remains as an open problem. 
\item {\em Combine the constrained encoders with error-correction capability.} To further reduce the error propagation during decoding procedure, we are interested in the problem of combining our proposed encoders with error-correction capability. Recently, the problem of correcting multiple criss-cross deletions (or insertions) in arrays has been investigated in \cite{2derror,2d-error}. A natural question is whether such codes can be modified and adapted for our encoders so that the output arrays are 2D constrained codes that are also capable of correcting deletions, insertions and substitutions.
\end{enumerate}

\end{document}